\newcommand*{\addFileDependency}[1]{
  \typeout{(#1)}
  \@addtofilelist{#1}
  \IfFileExists{#1}{}{\typeout{No file #1.}}
}
\newcommand*{\myexternaldocument}[1]{%
    \externaldocument{#1}%
    \addFileDependency{#1.tex}%
    \addFileDependency{#1.aux}%
}
\newtheorem{proposition}{Proposition}
\newtheorem{definition}{\textbf{Definition}}
\DeclareMathOperator*{\argmin}{arg\,min}
\DeclareMathOperator*{\argmax}{arg\,max}
\newcommand\iidsim{\stackrel{\text{i.i.d.}}{\sim}}
\newcommand{\ahtis}{{\texttt{AHTIS}}}
\definecolor{dorange}{RGB}{223,116,0}
\definecolor{dred}{RGB}{170,0,0}
\definecolor{dgreen}{RGB}{5,118,0}
\begin{document}

%

%

\twocolumn[

\aistatstitle{Adaptive importance sampling for heavy-tailed distributions via $\alpha$-divergence minimization}

\aistatsauthor{ Thomas Guilmeau$^{\diamond}$\footnotemark[1] \And Nicola Branchini$^{\diamond}$\footnotemark[2] \And  Emilie Chouzenoux\footnotemark[1] \And Víctor Elvira\footnotemark[2] }

\aistatsaddress{\footnotemark[1] Université Paris-Saclay, CentraleSupélec, INRIA, CVN, France   \\ \footnotemark[2] University of Edinburgh, United Kingdom  } ]

\begin{abstract}
Adaptive importance sampling (AIS) algorithms are widely used to approximate expectations with respect to complicated target probability distributions. When the target has heavy tails, existing AIS algorithms can provide inconsistent estimators or exhibit slow convergence, as they often neglect the target’s tail behaviour. To avoid this pitfall, we propose an AIS algorithm that approximates the target by Student-t proposal distributions. We adapt location and scale parameters by matching the escort moments - which are defined even for heavy-tailed distributions - of the target and proposal. These updates minimize the $\alpha$-divergence between the target and the proposal, thereby connecting with variational inference. We then show that the $\alpha$-divergence can be approximated by a generalized notion of effective sample size and leverage this new perspective to adapt the tail parameter with Bayesian optimization. We demonstrate the efficacy of our approach through applications to synthetic targets and a Bayesian Student-t regression task on a real example with clinical trial data.

\end{abstract}

\section{INTRODUCTION}
Expectations that are challenging to compute arise repeatedly in probabilistic machine learning \citep{ghahramani2015probabilistic}, Bayesian statistics \citep{robert2007bayesian}, statistical signal processing \citep{sarkka2023bayesian}, option pricing in mathematical finance \citep{l2004quasi}, and many other fields where Monte Carlo methods are often the de-facto standard. Importance sampling (IS) generalizes the Monte Carlo integration principle to approximate expectations with respect to a target distribution $\pi$ \citep{robert1999monte,mcbook,kroese2014monte}. In IS, samples are obtained from a distribution $q$ called proposal that is not necessarily equal to $\pi$. 

Constructing an adequate proposal $q$ is difficult yet crucial for the performance of IS. Adaptive IS (AIS) algorithms, which iteratively refine the proposal distributions, have become the standard to construct efficient samplers~\citep{bugallo2017adaptive}. AIS proposal adaptation procedures can be based on moment matching \citep{cornuet2012}, gradient updates \citep{elvira2015gradient,elvira2023,elvira:hal-03136318}, or combined with Markov Chain Monte Carlo \citep{botev2013markov, martino2017layered,thin2021neo}. 

Several recent works have also highlighted connections between AIS and variational inference (VI) \citep{domke2018importance,finke2019importance,dhaka2021challenges,mattei2022uphill,zhang2022pathfinder,kviman2022multiple}, a framework popular in Bayesian statistics, machine learning and signal/image processing \citep{jordan1999introduction,blei2017variational,marnissi:hal-01613200}. Indeed, VI methods aim at approximating a target $\pi$ with a distribution $q$, by explicitly minimizing a statistical divergence, typically the Kullback-Leibler (KL) divergence. In IS and AIS, the most widely used criterion to evaluate performance is the effective sample size (ESS), which has some connections with a statistical divergence.

In this paper, we focus on a class of AIS procedures based on moment matching, in particular on the AMIS framework of \cite{cornuet2012} that is behind recent state-of-the-art AIS algorithms \citep{paananen2021implicitly}. Although popular, moment-matching updates can be ill-defined when the target or the proposal is heavy-tailed with undefined moments. Notable applications with heavy-tailed $\pi$  include: Student-t error models in Bayesian regression, realistic posterior distributions that are robust to outliers or promoting sparse solutions \citep{fernandez1998bayesian,tipping2005,Amrouche2022}; applied econometrics, where parameter estimation for stochastic volatility models of option pricing involves complicated heavy-tailed distributions \citep{chib2002markov}; analysing financial returns datasets \citep{roy2010monte}. Similarly, heavy-tailed proposals $q$ can be beneficial in AIS \citep[Chapter 9]{mcbook}, although they may not have finite moments thus preventing the application of existing moment-matching methods.

\textbf{Contributions.} \textbf{(1)} We propose an AIS framework, hereby named \ahtis{} (adaptive heavy-tailed importance sampling), allowing heavy-tailed target and proposal distributions. Its proposal adaptation mechanism is based on matching the moments of \emph{escort} densities associated to the target and proposal, i.e., versions of the density with lighter tails. \textbf{(2)} We show that our proposed moment matching corresponds to the minimization of an $\alpha$-divergence. Our approach generalizes previous AIS moment-matching procedures restricted to the KL divergence. 
\textbf{(3)} We show that a generalized notion of effective sample size, the $\alpha$-ESS, is an IS approximation of the $\alpha$-divergence, providing new connections between VI and AIS.
\textbf{(4)} Finally, we exploit this new insight to design a new joint adaptation strategy for the tail and the location/scale parameters of the proposals using Bayesian optimization, outperforming existing moment-matching AIS both when a good tail parameter is known in advance and when it needs to be adapted. This advantage of \ahtis{} is illustrated in \cref{fig:1d-escortVSplain}.
\begin{figure}[H]
    \centering
    \includegraphics[width=0.8\columnwidth]{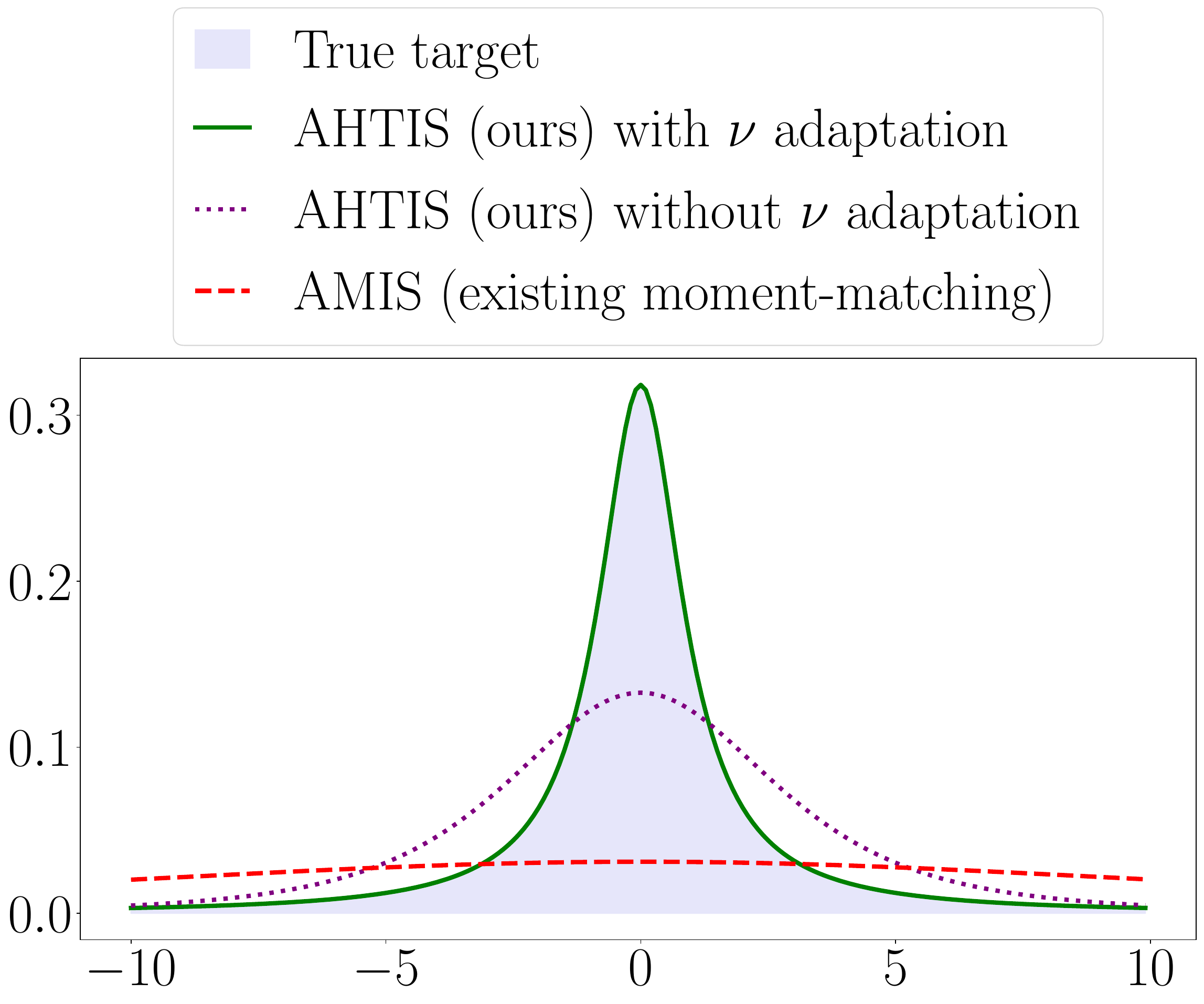}
    \caption{In this illustrative example, the target $\pi$ is a Student-t distribution with $\nu_{\pi}=1$ degrees of freedom (a Cauchy distribution), which is very heavy-tailed and has undefined mean and variance. We show three Student-t approximations using \textbf{(i)} existing moment matching with $\nu=3$ degrees of freedom ($\nu> 2$ is required for the proposals to have moments), \textbf{(ii)} \ahtis{} with $\nu=3$ degrees of freedom, and \textbf{(iii)} \ahtis{} with degrees of freedom adaptation.}
    \label{fig:1d-escortVSplain}
\end{figure}

\paragraph{Outline.}
In Section \ref{section:background}, we introduce the notion of escort probability and our approximating family. In Section \ref{section:method}, we introduce our AIS algorithm, \ahtis, with adaptation of the location, scale, and tail parameters of its proposal. Finally, we show the performance of \ahtis{} on heavy-tailed target distributions in Section \ref{section:experiments}, before concluding in Section \ref{section:conclusion}.

\section{BACKGROUND}
\label{section:background}

\subsection{Importance Sampling}

Importance sampling allows for the Monte Carlo integration of integrals of the form $I = \int h(x) \pi(x)dx$ when samples from the target density $\pi$ are either unavailable, or even inefficient (such as in rare events). Instead, one samples from a proposal distribution $q$ and uses so-called importance weights to correct the estimation. The simplest IS estimator of $I$ is the unnormalized IS estimator (UIS), given by  
\begin{equation}
 \widehat{I}_{\text{UIS}} =  \sum_{m=1}^{M} w^{(m)} h(x^{(m)}) , ~ \{ x^{(m)} \}_{m=1}^{M} \iidsim q(x), 
\end{equation}
where $w^{(m)} = \pi(x^{(m)}) / q(x^{(m)}) $ are the (unnormalized) importance weights using the target probability density function (pdf), $\pi(x)$. When $q = \pi$, $\widehat{I}_{\text{UIS}}$ recovers the plain Monte Carlo estimator, $\widehat{I}_{\text{MC}}$. In many cases, we only have access to the unnormalized target density $\widetilde{\pi}(x) = \pi(x) Z_{\pi}$, i.e., the normalizing constant $Z_{\pi}$ is unknown. The standard estimator for $Z_{\pi}$ is 
\begin{align} \label{eq:z_estimate}
    Z_{\pi} \approx \widehat{Z}_{\pi} = \frac{1}{M} \sum_{m=1}^{M} \frac{\widetilde{\pi}(x^{(m)})}{q(x^{(m)})} = \frac{1}{M} \sum_{m=1}^{M} \widetilde{w}^{(m)} .
\end{align}
\cref{eq:z_estimate} allows one to estimate $I$ when $Z_{\pi}$ is unknown, leading to the self-normalized IS (SNIS) estimator
\begin{equation}
    I \approx \widehat{I}_{\text{SNIS}} =  \sum_{m=1}^{M} \overline{w}^{(m)} h(x^{(m)}) ,
\end{equation}
where $\overline{w}^{(m)} = \widetilde{w}^{(m)} / \sum_{\ell=1}^{M} \widetilde{w}^{(\ell)}$. The almost sure convergence $\widehat{I}_{\text{SNIS}} \xrightarrow[M \rightarrow +\infty]{a.s.} I$ is guaranteed as soon as $\pi(x) > 0 \Rightarrow q(x) > 0$ \citep[Chapter 4]{mcbook}.  

\textbf{Assessing IS performance.} The mean-squared error (MSE) is a common way to evaluate the performance of estimators \citep{mcbook} and, for both $\widehat{I}_{\text{UIS}}$ and $\widehat{I}_{\text{SNIS}}$, the MSE decays at the standard Monte Carlo rate $\mathcal{O}(1/M)$. See, e.g., \citep[Chapter 8]{chopin2020introduction} for more theoretical properties of IS estimators. However, the MSE is usually difficult to evaluate in AIS algorithms. The effective sample size (ESS) is a more practical, and widely used, metric to assess the quality of IS estimators on the fly. It is a sample approximation of the ratio of variances between the SNIS estimator and a Monte Carlo estimator with $\pi$ \citep{kong1992note, elvira2022rethinking}, computed as 
\begin{equation}\label{eq:ess}
    \widehat{\mathrm{ESS}} = \frac{1}{\sum_{m=1}^M \left(\overline{w}^{(m)}\right)^2} \approx \mathrm{ESS} = \frac{\mathbb{V}_{q}[\widehat{I}_{\text{SNIS}}]}{\mathbb{V}_{\pi}[\widehat{I}_{\text{MC}}]} .
\end{equation}
While the original motivation is the above approximation of a ratio of variances, $\widehat{\mathrm{ESS}}$ has been shown to be connected with the chi-squared divergence $\chi^2(\pi,q)$ \citep{orsak1991constrained,agapiou2017importance,sanz2018importance,sanz2020bayesian,akyildiz2021convergence, agarwal2022principled,elvira2022rethinking}. Therefore, the choice of proposal $q$ is crucial to achieve good performance in the above metrics, which led to the development of adaptive IS algorithms (AIS), where proposals are iteratively adapted \citep{bugallo2017adaptive}. 

\textbf{Adaptive multiple IS (AMIS).} AIS algorithms recycle samples to improve the quality of $\widehat{I}_{\text{SNIS}}$. Suppose we have $T$ proposals $\{q_t\}_{t=1}^T$ and for every $t \in \{1,\dots,T\}$ the samples are $\{x^{(m)}_t \}_{m=1}^{M}$. One way to re-use all the $TM$ samples is to assign to each of them an unnormalized weight $\widetilde{w}_t^{(m)} = \frac{\widetilde{\pi}(x_t^{(m)})}{q_t(x_t^{(m)})}$, and possibly perform a resampling step. It has been shown that an alternative, deterministic mixture (DM) weighting, achieves better results by considering \emph{all} the proposals in the weighting of each sample \citep{elvira2019}. The unnormalized DM weight of the sample $x_t^{(m)}$ reads
\begin{equation}
    \widetilde{w}_t^{(m)} = \frac{\widetilde{\pi}(x_t^{(m)})}{\frac{1}{T} \sum_{\tau=1}^T  q_{\tau}(x_t^{(m)})}.
\end{equation}
The DM weighting is notably used by the adaptive multiple importance sampling (AMIS) algorithm proposed in \cite{cornuet2012}, where at each iteration, the proposal is adapted using all the past samples using DM weights. \cite{cornuet2012} suggest to use the DM weights to update the proposal such that its moments match the (estimated) moments of $\pi$.

\subsection{Escort Distributions and $\alpha$-Divergence Minimization}

We introduce now existing results about the minimization of statistical divergences over Student-t distributions which we will use to develop our new method.

\begin{definition}[Multivariate Student-t]
    The multivariate Student-t distribution on $\mathbb{R}^d$ with $\nu> 0$ degrees of freedom, location parameter $\mu \in \mathbb{R}^d$, and positive-definite scale matrix $\Sigma \in \mathcal{S}_{++}^d$ has a pdf with respect to the Lebesgue measure of the form
    \begin{equation}
        q_{\mu, \Sigma, \nu}(x) \propto \left( 1 + \frac{1}{\nu} (x-\mu)^{\top}\Sigma^{-1}(x-\mu) \right)^{- \frac{\nu + d}{2}}
    \end{equation}
    and is normalized by $Z_{\nu, \Sigma} = \frac{\Gamma(\frac{\nu}{2}) }{\Gamma(\frac{\nu+d}{2})} \left(\nu^{d} \pi^{d}\det(\Sigma)\right)^{\frac{1}{2}}$.
\end{definition}

Student-t distributions recover Cauchy distributions when $\nu = 1$ and Gaussian distributions in the limit $\nu \rightarrow + \infty$. They have finite first moment for $\nu > 1$ and finite second moment for $\nu > 2$. Next, we introduce the concepts of escort distribution and $\alpha$-divergence, which will be used throughout \cref{section:method}.

\begin{definition}[Escort version of a pdf]
    \label{def:escortProb}
    Consider $\alpha > 0$ and a pdf $p$. The \emph{escort} version of $p$ \citep{tsallis2009introduction} with exponent $\alpha$ is the pdf $p^{(\alpha)}$ defined by
    \begin{equation}
        p^{(\alpha)}(x) = \frac{p(x)^{\alpha}}{\int p(x)^{\alpha}dx},
    \end{equation}
    assuming that the normalizing constant is finite.
\end{definition}

\begin{definition}[$\alpha$-divergence]
    \label{def:alphadiv}
The $\alpha$-divergence is defined for $\alpha > 0$ and $\alpha \neq 1$ as
\begin{equation}
    \mathrm{D}_{\alpha}(\pi,q) = \frac{1}{\alpha (\alpha-1)} \left (\int \pi(x)^\alpha q(x)^{1- \alpha} dx - 1 \right).
\end{equation}
Its discrete counterpart $\mathrm{D}^M_{\alpha}(\cdot, \cdot)$ is defined similarly on the simplex of $\mathbb{R}^M$, denoted by $\Delta_M$.
\end{definition}
The $\alpha$-divergence generalizes many well-known divergences such as $\mathrm{KL}(\pi, q)$ ($\alpha\rightarrow 1$) and $\chi^2(\pi, q) (\alpha=2)$. The KL divergence is such that $\theta \longmapsto \mathrm{KL}(\pi,q_{\theta})$ is minimized under a moment-matching property when the pdf $q_{\theta}$ form an exponential family \cite[Equation (10.187)]{bishop2006}. This is the case of Gaussian distributions and hence, the optimal KL approximation of $\pi$ is the Gaussian pdf with same first and second order moments as $\pi$. The above result has been generalized beyond this setting, as the next result shows.

\begin{proposition}
    \label{prop:escortMM} \citep{MLE-VI-lambdaExpFamilies}
    Consider a target pdf $\pi$ and the family of Student-t distributions with $\nu> 0$ degrees of freedom. If the escort pdf $\pi^{(\alpha)}$ with $\alpha = 1 + \frac{2}{\nu +d}$ exists and has finite first and second-order moments, then the parameters $(\mu^{\star}_{\nu}, \Sigma^{\star}_{\nu})$ such that
    \begin{equation}
        \label{eq:optCondStudent}
        \begin{cases}
            \mu^{\star}_{\nu} = \pi^{(\alpha)}(x)\\
            \Sigma^{\star}_{\nu} = \pi^{(\alpha)}(x x^{\top}) -\mu_{\nu}^{\star} \mu_{\nu}^{\star \top}
        \end{cases}
    \end{equation}
    minimize $(\mu,\Sigma) \longmapsto \mathrm{D}_{\alpha}(\pi, q_{\mu,\Sigma,\nu})$ .
\end{proposition}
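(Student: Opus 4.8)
The plan is to reduce the minimization of $(\mu,\Sigma)\mapsto\mathrm{D}_\alpha(\pi,q_{\mu,\Sigma,\nu})$ to an explicit problem over $(\mu,\Sigma)$, exploiting an algebraic cancellation that is special to the choice $\alpha = 1 + \tfrac{2}{\nu+d}$. Since $\nu,d>0$ we have $\alpha>1$, hence $\alpha(\alpha-1)>0$, so by \cref{def:alphadiv} minimizing $\mathrm{D}_\alpha(\pi,q_{\mu,\Sigma,\nu})$ is equivalent to minimizing
\begin{equation}
F(\mu,\Sigma) \;=\; \int \pi(x)^\alpha\, q_{\mu,\Sigma,\nu}(x)^{1-\alpha}\,dx .
\end{equation}
The key point is that raising the Student-t density to the power $1-\alpha = -\tfrac{2}{\nu+d}$ cancels its outer exponent $-\tfrac{\nu+d}{2}$, giving
\begin{equation}
q_{\mu,\Sigma,\nu}(x)^{1-\alpha} \;=\; Z_{\nu,\Sigma}^{\,\alpha-1}\Bigl(1 + \tfrac{1}{\nu}(x-\mu)^{\top}\Sigma^{-1}(x-\mu)\Bigr),
\end{equation}
which is affine in the Gaussian sufficient statistics $x$ and $xx^{\top}$. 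Writing $\pi(x)^\alpha = C_\alpha\,\pi^{(\alpha)}(x)$ with $C_\alpha = \int\pi(x)^\alpha\,dx<\infty$, this turns $F$ into $C_\alpha Z_{\nu,\Sigma}^{\alpha-1}\bigl(1 + \tfrac1\nu\mathbb{E}_{\pi^{(\alpha)}}[(x-\mu)^{\top}\Sigma^{-1}(x-\mu)]\bigr)$, which is finite precisely because $\pi^{(\alpha)}$ has finite second-order moments.

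Next I would expand the quadratic form around $m \eqdef \mathbb{E}_{\pi^{(\alpha)}}[x]$ and $S \eqdef \mathbb{E}_{\pi^{(\alpha)}}[xx^{\top}] - mm^{\top}$, the escort covariance (positive definite, since $\pi^{(\alpha)}$ has a Lebesgue density on $\mathbb{R}^d$ and hence full-dimensional support): $\mathbb{E}_{\pi^{(\alpha)}}[(x-\mu)^{\top}\Sigma^{-1}(x-\mu)] = \tr(\Sigma^{-1}S) + (m-\mu)^{\top}\Sigma^{-1}(m-\mu)$. For any fixed $\Sigma$, the only $\mu$-dependence is the term $(m-\mu)^{\top}\Sigma^{-1}(m-\mu)\geq 0$, entering with a strictly positive multiplier, so $F(\cdot,\Sigma)$ is minimized at $\mu^{\star}_{\nu}=m=\pi^{(\alpha)}(x)$ regardless of $\Sigma$, which is the first line of \cref{eq:optCondStudent}.

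It then remains to minimize $\Sigma\mapsto F(m,\Sigma)$. Using $Z_{\nu,\Sigma}^{\alpha-1}\propto(\det\Sigma)^{(\alpha-1)/2}$ with a constant independent of $\Sigma$, and setting $\beta\eqdef\tfrac{\alpha-1}{2}=\tfrac{1}{\nu+d}\in(0,1)$, the problem becomes $\min_{\Sigma\succ 0}\,(\det\Sigma)^{\beta}\bigl(1 + \tfrac1\nu\tr(\Sigma^{-1}S)\bigr)$. A minimizer exists by a coercivity argument on the open cone $\mathcal{S}_{++}^d$: the objective blows up as $\det\Sigma\to\infty$ (via the $(\det\Sigma)^{\beta}$ factor, the other factor being $\geq 1$) and as $\Sigma$ approaches the boundary (if $\lambda_{\min}(\Sigma)\to 0$ the product grows like $\lambda_{\min}^{\beta-1}\to\infty$, using $S\succ 0$ and $\beta<1$). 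Differentiating $\log F(m,\Sigma)$ with $\partial_\Sigma\log\det\Sigma=\Sigma^{-1}$ and $\partial_\Sigma\tr(\Sigma^{-1}S)=-\Sigma^{-1}S\Sigma^{-1}$ and multiplying on both sides by $\Sigma$ gives $\beta\,\Sigma = \nu^{-1}S/\bigl(1+\nu^{-1}\tr(\Sigma^{-1}S)\bigr)$; this forces $\Sigma=cS$ for a scalar $c>0$, and substituting back (so $\tr(\Sigma^{-1}S)=d/c$) yields the scalar equation $c\nu\beta+\beta d=1$, i.e. $c=(1-\beta d)/(\nu\beta)=1$ since $\beta=1/(\nu+d)$. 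Hence the unique stationary point, and therefore the global minimizer, is $\Sigma^{\star}_{\nu}=S=\pi^{(\alpha)}(xx^{\top})-\mu^{\star}_{\nu}\mu^{\star\top}_{\nu}$, the second line of \cref{eq:optCondStudent}.

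The main obstacle is this last step: confirming that the stationary point obtained from the matrix calculus is the \emph{global} minimum. This relies on the coercivity/boundary analysis on $\mathcal{S}_{++}^d$ sketched above together with the non-degeneracy $S\succ 0$; the remaining work — the exponent cancellation and solving $c\nu\beta+\beta d=1$ — is routine once the reduction is set up.
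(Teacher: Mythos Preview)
The paper does not actually prove this proposition: it is quoted as a known result from \citep{MLE-VI-lambdaExpFamilies} and used as background, so there is no ``paper's own proof'' to compare against. Your argument is correct and self-contained. The exponent cancellation $(1-\alpha)\cdot\bigl(-\tfrac{\nu+d}{2}\bigr)=1$ is exactly the reason the specific value $\alpha=1+\tfrac{2}{\nu+d}$ is distinguished, and it reduces $F(\mu,\Sigma)$ to a product of $(\det\Sigma)^{\beta}$ and an affine function of the escort second moment, after which the optimization in $\mu$ and then $\Sigma$ is elementary. Your coercivity sketch is adequate for existence, and the stationarity computation leading to $c=1$ is clean; the only minor point worth stating more explicitly is that the objective $\Sigma\mapsto(\det\Sigma)^{\beta}\bigl(1+\tfrac1\nu\tr(\Sigma^{-1}S)\bigr)$ is smooth on the open cone $\mathcal{S}_{++}^d$, so the global minimizer (which exists by coercivity) must be a stationary point, and since the stationary point is unique it is the global minimizer.

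By contrast, the reference the paper cites obtains this result as a special case of a general moment-matching characterization for $\lambda$-deformed exponential families (of which Student-t distributions are an instance), where the optimality condition for the $\alpha$-divergence is that the escort sufficient statistics of target and model coincide. Your route is more elementary and does not invoke that machinery; the trade-off is that the general framework immediately yields analogous statements for other deformed families, whereas your computation is tailored to the Student-t case.
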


\section{ADAPTIVE HEAVY-TAILED IMPORTANCE SAMPLING}
\label{section:method}
We now present our proposed AIS framework, \ahtis, for handling target distributions with heavy tails and potentially undefined moments based on $\alpha$-divergence minimization. Our framework is summarized in \cref{alg:ours}, which we describe next. In \cref{sec:alphaess}, we show that the so-called $\alpha$-ESS can be used to approximate the $\alpha$-divergence. We exploit this insight to propose our tail parameter $\nu$ adaptation in \cref{sec:adaptingtail}.

\subsection{Step-by-step Breakdown of \ahtis{} and Justification.} 
First, as input to \cref{alg:ours} we require initial location, scale, and tail parameters for the proposal, i.e., $(\mu_0, \Sigma_0, \nu_0)$ respectively. The algorithm follows the following steps for $T>0$ iterations. First, we generate samples from $q_{\mu_t,\nu_t,\Sigma_t}$ (\textbf{step 2}). Then, tail adaptation (\textbf{step 3}) finds $\nu_{t+1}$ (and therefore $\alpha_{t+1})$ with Bayesian optimization, which we detail in \cref{sec:adaptingtail}. The weighting (\textbf{step 4}) uses the DM approach described in \cref{section:background} allowing the proposal to learn from \emph{all} the generated samples. Note that the numerator involves the escort version of the target, $\pi^{(\alpha_{t+1})}$. Notably, this means that when the variance of the weight with respect to the true target $\pi$ is infinite (as it would be the case for existing AIS algorithms, and is common), since $\alpha_t > 1$, the variance of \cref{alg:weighting} may still be finite. 
 
Finally, the escort moment-matching (\textbf{step 5}) minimizes $(\mu,\Sigma) \longmapsto \mathrm{D}_{\alpha_{t+1}}(\pi,q_{\mu,\Sigma,\nu_{t+1}})$ as explained in \cref{section:background}.
\begin{algorithm}[htb]
    \caption{\ahtis}
    \label{alg:ours}
    \begin{algorithmic}[1]
    
        \REQUIRE $\nu_0 > 0$, $\mu_0 \in \mathbb{R}^d$, $\Sigma_0 \in \mathcal{S}_{++}^d$  \vspace{2pt}
        
        \FOR{$t=0,..., T$}\vspace{2pt}
        
            \STATE \textbf{Sampling:} $x_{t}^{(m)} \iidsim q_{\mu_{t}, \Sigma_{t}, \nu_{t}}$, $m = 1,\dots,M$.\vspace{2pt}
            
            \STATE {\bfseries Tail adaptation with BO:} \begin{itemize}
                \item  If $t=0$, $\nu_{1}=\nu_0$, else, set $\nu_{t+1}$ with \cref{alg:tailadapt} in Appendix \ref{appendix:tailAdaptation}.
                \item Set $\alpha_{t+1} = 1 + \frac{2}{\nu_{t+1} + d}$.
            \end{itemize}  \vspace{2pt}

            \STATE {\bfseries Temporal DM weighting:} For $m = 1,\dots,M$ and $\tau = 0,\dots,t$, compute the unnormalized importance weights using the unnormalized escort target as 
            \begin{equation}\label{alg:weighting}
                \widetilde{w}_{\tau}^{(m)} = \frac{\left(\widetilde{\pi}(x_{\tau}^{(m)}) \right)^{\alpha_{t+1}}}{\frac{1}{t+1} \sum_{k=0}^{t} q_{\mu_{k}, \Sigma_{k}, \nu_{k}} (x_{\tau}^{(m)})}
            \end{equation} 
            and normalize to obtain $\overline{w}_{\tau}^{(m)} = \widetilde{w}_{\tau}^{(m)} / \sum_{\ell=1}^{M} \widetilde{w}_{\tau}^{(\ell)}$.\vspace{2pt}

                \STATE {\bfseries Escort moment matching:} Set $(\mu_{t+1}, \Sigma_{t+1})$ with the updates 
            \begin{align}
                \mu_{t+1} &= \sum_{\tau=0}^{t} \sum_{m=1}^{M} \overline{w}_{\tau}^{(m)} x_{\tau}^{(m)}\label{eq:approxOptCondStudent1}\\
                \Sigma_{t+1} &= \sum_{\tau=0}^{t} \sum_{m=1}^{M} \overline{w}_{\tau}^{(m)} x_{\tau}^{(m)}x_{\tau}^{(m)\top} - \mu_{t+1} \mu_{t+1}^{\top}\label{eq:approxOptCondStudent2}
            \end{align}
        \ENDFOR
        \STATE \textbf{Return:} $\{ \overline{w}_{t}^{(m)}, x_{t}^{(m)} \}_{t=1,m=1}^{T,M}$.
    \end{algorithmic}
\end{algorithm}
  \ahtis{} is motivated by the minimization of the $\alpha$-divergence $\mathrm{D}_{\alpha}(\pi,q)$ between target and proposal, which is known to exhibit favourable properties for heavy-tailed distributions \citep{birrell2021variational}, as well as for robust approximate inference with generalized VI on misspecified models in Bayesian statistics \citep{knoblauch2022optimization}. More precisely, \cref{alg:ours} addresses the following joint optimization problem involving $(\mu,\Sigma,\nu)$,
\begin{equation}
    \label{eq:jointMinimization1}
  \mu^{\star}, \Sigma^{\star}, \nu^{\star} 
 = \argmin_{\mu,\Sigma,\nu} \mathrm{D}_{\alpha(\nu)}(\pi,q_{\mu,\Sigma,\nu}).
\end{equation}

Recall from \cref{section:background} that the value $\alpha(\nu)$ in \eqref{eq:jointMinimization1} is such that $\alpha(\nu) = 1 + \frac{2}{\nu+d}$, where $d$ is the dimension of $x$. Hence, we are not minimizing a fixed $\alpha$-divergence, rather jointly adapting the $\alpha$-divergence parameter and the approximating family's degree of freedom parameter $\nu$. We now establish in \cref{eq:wellPosednessProblem} that when $\pi$ is a Student-t distribution, the optimization problem in \cref{eq:jointMinimization1} is solved when the proposal recovers $\pi$, illustrating the rationale of our approach. However, we remark that our algorithm \ahtis{} is not restricted to Student-t targets.
\begin{proposition}[Well-posedness of tail-adaptation] 
    \label{eq:wellPosednessProblem}
    Suppose that the target $\pi$ is a Student-t pdf with $\nu_{\pi} > 0$ degrees of freedom. Then, Problem \eqref{eq:jointMinimization1} is solved by $(\mu^{\star},\Sigma^{\star},\nu^{\star})$ such that $\nu^{\star} = \nu_{\pi}$ and  $q_{\mu^{\star},\Sigma^{\star},\nu^{\star}} = \pi$.
\end{proposition}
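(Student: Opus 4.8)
The plan is to reduce the statement to the nonnegativity of the $\alpha$-divergence together with the identifiability of the Student-t family. First I would observe that for every $\nu>0$ the exponent in \eqref{eq:jointMinimization1}, $\alpha(\nu)=1+\tfrac{2}{\nu+d}$, satisfies $\alpha(\nu)>1$, so $\mathrm{D}_{\alpha(\nu)}$ is always used in the ``$\alpha>1$'' regime. Writing $q=q_{\mu,\Sigma,\nu}$ for brevity and using that $q$ has full support on $\mathbb{R}^d$, one has $\int \pi(x)^{\alpha(\nu)}q(x)^{1-\alpha(\nu)}\,dx=\mathbb{E}_{q}\left[(\pi/q)^{\alpha(\nu)}\right]$, and applying Jensen's inequality to the strictly convex map $t\mapsto t^{\alpha(\nu)}$ on $(0,\infty)$ together with $\mathbb{E}_{q}[\pi/q]=1$ gives $\mathrm{D}_{\alpha(\nu)}(\pi,q)\ge 0$, with equality iff $\pi/q$ is $q$-a.s.\ constant, equivalently (integrating against $q$) iff $q=\pi$ Lebesgue-a.e. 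Consequently $\inf_{\mu,\Sigma,\nu}\mathrm{D}_{\alpha(\nu)}(\pi,q_{\mu,\Sigma,\nu})\ge 0$.

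Next I would exhibit a point attaining this bound. Since $\pi$ is a Student-t pdf, write its parameters as $(\mu_{\pi},\Sigma_{\pi},\nu_{\pi})$ and set $\nu^{\star}=\nu_{\pi}$, $\mu^{\star}=\mu_{\pi}$, $\Sigma^{\star}=\Sigma_{\pi}$; then $q_{\mu^{\star},\Sigma^{\star},\nu^{\star}}=\pi$ identically, so $\mathrm{D}_{\alpha(\nu^{\star})}(\pi,q_{\mu^{\star},\Sigma^{\star},\nu^{\star}})=0$ directly from \cref{def:alphadiv}. As a consistency check with \cref{prop:escortMM}, a short computation shows that for $\alpha=\alpha(\nu_{\pi})$ the escort $\pi^{(\alpha)}$ is again Student-t, with $\nu_{\pi}+2$ degrees of freedom, location $\mu_{\pi}$, and scale $\tfrac{\nu_{\pi}}{\nu_{\pi}+2}\Sigma_{\pi}$; since $\nu_{\pi}+2>2$ its first and second moments are finite, its mean equals $\mu_{\pi}$, and its covariance equals $\tfrac{\nu_{\pi}+2}{\nu_{\pi}}\cdot\tfrac{\nu_{\pi}}{\nu_{\pi}+2}\Sigma_{\pi}=\Sigma_{\pi}$, so the escort moment-matching conditions \eqref{eq:optCondStudent} reproduce $(\mu_{\pi},\Sigma_{\pi})$, again yielding $q_{\mu^{\star},\Sigma^{\star},\nu^{\star}}=\pi$.

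Finally I would conclude and characterize the minimizers: $(\mu^{\star},\Sigma^{\star},\nu^{\star})$ attains the infimum $0$ and hence solves \eqref{eq:jointMinimization1}; conversely, any minimizer has objective value $0$ and therefore, by the equality case of the Jensen bound, yields a proposal equal to $\pi$ a.e., which by identifiability of the Student-t family — the tail order $-(\nu+d)/2$ with $d$ fixed pins down $\nu$, after which the quadratic form in the density pins down $\mu$ and $\Sigma$ — forces $\nu=\nu_{\pi}$ and $q_{\mu,\Sigma,\nu}=\pi$, exactly as claimed. I do not expect a genuine obstacle here; the only steps requiring slight care are checking $\alpha(\nu)>1$ so that the Jensen argument and its equality case apply verbatim, and, for the parenthetical consistency remark, getting the $\tfrac{\nu}{\nu+2}$ rescaling of the escort's scale matrix right so that its covariance collapses back onto $\Sigma_{\pi}$.
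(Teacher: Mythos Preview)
Your proposal is correct and follows essentially the same approach as the paper: both argue that $\alpha(\nu)>1$ ensures $\mathrm{D}_{\alpha(\nu)}(\pi,q)\ge 0$ with equality iff $q=\pi$, and then observe that since $\pi$ is itself Student-t, the choice $(\mu^\star,\Sigma^\star,\nu^\star)=(\mu_\pi,\Sigma_\pi,\nu_\pi)$ achieves the value zero. The paper simply quotes the nonnegativity and equality case of the $\alpha$-divergence as known, whereas you supply a Jensen argument for it, and you additionally (and correctly) add a consistency check with \cref{prop:escortMM} and an identifiability argument for uniqueness---both of which go beyond what the proposition actually asserts and what the paper proves.
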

The proof is postponed to the Appendix \ref{app:wellPosednessStudentTarget}. To obtain a practical algorithm to minimize the problem in \cref{eq:jointMinimization1}, we propose to consider $(\mu,\Sigma)$ and $\nu$ separately, and equivalently reformulate \cref{eq:jointMinimization1} as 
\begin{equation}
    \label{eq:jointMinimization2}
    \nu^{\star} = \argmin_{\nu} \min_{\mu,\Sigma} \mathrm{D}_{\alpha(\nu)}(\pi,q_{\mu,\Sigma,\nu}).
\end{equation}
This is motivated by the fact that for a given $\nu > 0$, $\min_{\mu,\Sigma} \mathrm{D}_{\alpha(\nu)}(\pi,q_{\mu,\Sigma,\nu}) = \mathrm{D}_{\alpha(\nu)}(\pi, q_{\mu^{\star}_{\nu}, \Sigma^{\star}_{\nu}, \nu})$, with $(\mu^{\star}_{\nu}, \Sigma^{\star}_{\nu})$ satisfying Eq.~\eqref{eq:optCondStudent}. The behaviour of $\nu \longmapsto \mathrm{D}_{\alpha(\nu)}(\pi, q_{\mu^{\star}_{\nu}, \Sigma^{\star}_{\nu}, \nu})$ is illustrated in \cref{fig:optimalRenyiValue} (see Appendix \ref{app:wellPosednessStudentTarget} for details).
\begin{figure}[htb]
    \centering
    \includegraphics[width=\columnwidth]{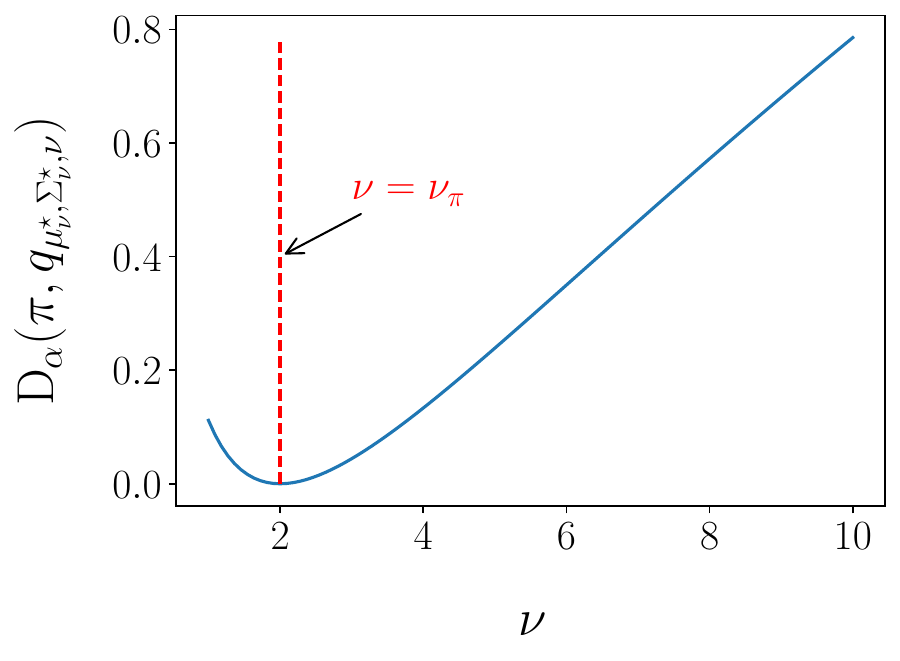}
    \caption{Optimal $\alpha$-divergence value as a function of $\nu > 0$ from Proposition \ref{prop:escortMM} when $\pi$ is a Student-t distribution in dimension $d=5$ and degree of freedom parameter $\nu_{\pi} = 2$ (vertical dotted red line).}
    \label{fig:optimalRenyiValue}
\end{figure}

Next, we propose an approach to solve this optimization problem within \textbf{step (2)} of \cref{alg:ours} without using any additional samples. This requires to evaluate the objective in \cref{eq:jointMinimization2}, which we address now.

\subsection{Connecting VI and IS with the $\alpha$-ESS}\label{sec:alphaess}

A challenge is that, for realistic target distributions $\pi$, one cannot evaluate the cost function $\mathrm{D}_{\alpha(\nu)}(\pi,q_{\mu,\nu,\Sigma})$ appearing in the minimization problem of \cref{eq:jointMinimization2}. We now show that an SNIS approximation of $\mathrm{D}_{\alpha(\nu)}(\pi,q_{\mu,\nu,\Sigma})$ is related in a precise way to an existing generalized ESS, the $\alpha$-ESS, which belongs to the Huggins-Roy family of ESS metrics \citep{martino2017effective,huggins2019sequential}. This result connects further VI and IS and allows us to obtain a practical way to approximate $\mathrm{D}_{\alpha(\nu)}(\pi,q_{\mu,\nu,\Sigma})$, that we will use to adapt the tail parameter $\nu$ in \cref{sec:adaptingtail}.
The $\alpha$-ESS is defined over the simplex $\Delta_M$ as:
\begin{equation}
    \widehat{\mathrm{ESS}}_{\alpha}(\overline{w}) = \left( \sum_{m=1}^M \left(\overline{w}^{(m)}\right)^{\alpha} \right)^{\frac{1}{1-\alpha}},\,\forall \overline{w} \in \Delta_M.
\end{equation}
We now show our main result connecting $\widehat{\mathrm{ESS}}_{\alpha}$ and $\mathrm{D}_{\alpha}(\pi, q)$ for general target and proposal distributions.
\begin{proposition}[Almost sure convergence]
    \label{prop:divergenceESS}
    Consider a target $\pi$ and a proposal $q$ with normalized importance weights $\{ \overline{w}^{(m)}\}_{m=1}^M$ associated with i.i.d.~samples from $q$. Then, the discrete $\alpha$-divergence between the weights $\{\overline{w}^{(m)}\}_{m=1}^M$ and the uniform weights $\{ 1/M \}_{m=1}^M$ is related to $\widehat{\mathrm{ESS}}_{\alpha}$ as follows:
    \begin{align}
    &\mathrm{D}^M_{\alpha}(\{\overline{w}^{(m)}\}_{m=1}^M, \{ 1/M\}_{m=1}^M)  \nonumber \\
    = &\frac{ M^{ \alpha - 1}}{\alpha (\alpha - 1 )}\left( \widehat{\mathrm{ESS}}_{\alpha}(\{ \overline{w}^{(m)}\}_{m=1}^M)^{1 - \alpha} - M^{1 - \alpha} \right).
    \end{align}
    Moreover, $\mathrm{D}_{\alpha}^{M}$ converges to $\mathrm{D}_{\alpha}(\pi,q)$, i.e.,
    \begin{equation}
        \mathrm{D}_{\alpha}^{M}(\{\overline{w}^{(m)}\}_{m=1}^M, \{ 1/M\}_{m=1}^M) \xrightarrow[M \rightarrow +\infty]{a.s.} \mathrm{D}_{\alpha}(\pi,q)
    \end{equation}
    in an almost sure sense when $\pi(x) > 0 \Rightarrow q(x) >0$. 
\end{proposition}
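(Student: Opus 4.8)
The statement has two parts — an exact equality and an almost sure limit — and I would prove them in that order; the equality is pure algebra, and the limit follows from the strong law of large numbers (SLLN) applied to the numerator and denominator of the self-normalized weights. For the identity, I would start from \cref{def:alphadiv} on $\Delta_M$ with first argument $\{\overline{w}^{(m)}\}_{m=1}^M$ and second argument $\{1/M\}_{m=1}^M$. Substituting $(1/M)^{1-\alpha} = M^{\alpha-1}$ gives
\begin{equation*}
  \mathrm{D}^M_{\alpha}\big(\{\overline{w}^{(m)}\}_{m=1}^M, \{1/M\}_{m=1}^M\big) = \frac{1}{\alpha(\alpha-1)}\Big( M^{\alpha-1}\sum_{m=1}^M (\overline{w}^{(m)})^{\alpha} - 1 \Big).
\end{equation*}
Since $\sum_{m=1}^M (\overline{w}^{(m)})^{\alpha} = \widehat{\mathrm{ESS}}_{\alpha}(\{\overline{w}^{(m)}\}_{m=1}^M)^{1-\alpha}$ directly from the definition of $\widehat{\mathrm{ESS}}_{\alpha}$, factoring out $M^{\alpha-1}$ and using $M^{-(\alpha-1)} = M^{1-\alpha}$ yields precisely the claimed formula.

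For the limit, write $\widetilde{w}^{(m)} = \widetilde{\pi}(x^{(m)})/q(x^{(m)})$ and $\overline{w}^{(m)} = \widetilde{w}^{(m)}/\sum_{\ell=1}^M \widetilde{w}^{(\ell)}$, so that
\begin{equation*}
  M^{\alpha-1}\sum_{m=1}^M (\overline{w}^{(m)})^{\alpha} = \frac{\tfrac{1}{M}\sum_{m=1}^M (\widetilde{w}^{(m)})^{\alpha}}{\big(\tfrac{1}{M}\sum_{m=1}^M \widetilde{w}^{(m)}\big)^{\alpha}}.
\end{equation*}
By the identity it then suffices to show this ratio converges a.s.\ to $\int \pi(x)^{\alpha}q(x)^{1-\alpha}\,dx$. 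Because the $x^{(m)}$ are i.i.d.\ from $q$ and the covering condition $\pi(x)>0\Rightarrow q(x)>0$ makes each $\widetilde{w}^{(m)}$ finite $q$-a.s., the SLLN gives $\tfrac{1}{M}\sum_m \widetilde{w}^{(m)} \to \mathbb{E}_q[\widetilde{\pi}(X)/q(X)] = \int\widetilde{\pi}(x)\,dx = Z_{\pi}\in(0,\infty)$ a.s.\ (so the denominator is eventually positive and $\overline{w}$ well-defined), and similarly $\tfrac{1}{M}\sum_m (\widetilde{w}^{(m)})^{\alpha} \to \mathbb{E}_q[(\widetilde{\pi}(X)/q(X))^{\alpha}] = \int \widetilde{\pi}(x)^{\alpha}q(x)^{1-\alpha}\,dx$ a.s.\ (possibly $+\infty$). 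Dividing and substituting $\widetilde{\pi} = Z_{\pi}\pi$ cancels all $Z_\pi$ factors, yielding $M^{\alpha-1}\sum_m (\overline{w}^{(m)})^{\alpha} \to \int\pi^{\alpha}q^{1-\alpha}\,dx$ and hence $\mathrm{D}^M_{\alpha}\big(\{\overline{w}^{(m)}\}_{m=1}^M, \{1/M\}_{m=1}^M\big) \to \frac{1}{\alpha(\alpha-1)}\big(\int\pi^{\alpha}q^{1-\alpha}\,dx - 1\big) = \mathrm{D}_{\alpha}(\pi,q)$ a.s.

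The only delicate point — the main obstacle, such as it is — is the SLLN for the numerator when $\alpha>1$ and $\int\widetilde{\pi}^{\alpha}q^{1-\alpha} = +\infty$: one must invoke the version of Kolmogorov's SLLN valid for non-negative i.i.d.\ summands with possibly infinite mean, so that the empirical average diverges a.s.\ and the conclusion still holds with both sides equal to $+\infty$. One should also track the sign of $\alpha(\alpha-1)$ (positive for $\alpha>1$, negative for $\alpha\in(0,1)$), noting that for $\alpha\in(0,1)$ Jensen's inequality forces $\int\pi^{\alpha}q^{1-\alpha}\,dx\le 1$ so the divergence is finite in that regime; everything else is routine bookkeeping.
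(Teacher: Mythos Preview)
Your proof is correct and follows essentially the same approach as the paper: both reduce the identity to the algebraic observation that $\mathrm{D}^M_{\alpha}(\{\overline{w}^{(m)}\},\{1/M\}) = \tfrac{1}{\alpha(\alpha-1)}\big(M^{\alpha-1}\sum_m (\overline{w}^{(m)})^{\alpha}-1\big)$ together with $\sum_m (\overline{w}^{(m)})^{\alpha} = \widehat{\mathrm{ESS}}_{\alpha}^{\,1-\alpha}$, and both obtain the almost sure limit by writing $M^{\alpha-1}\sum_m (\overline{w}^{(m)})^{\alpha}$ as the ratio of two empirical averages and invoking the SLLN (the paper simply cites a standard SNIS consistency result rather than spelling this out). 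Your additional remarks on the infinite-integral case and the sign of $\alpha(\alpha-1)$ go slightly beyond what the paper records, but the core argument is the same.
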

The proof is provided in Appendix \ref{app:proof_ess}. The quantity $\widehat{\mathrm{ESS}}_{\alpha}$ can be cheaply computed. Further, since our derivation shows that $\widehat{\mathrm{ESS}}_{\alpha}$ is specifically a SNIS estimator, we obtain a central limit theorem (CLT) by extending standard SNIS results \citep{chopin2020introduction}, which allows to quantify uncertainty using asymptotic confidence intervals. 
\begin{proposition}[CLT]
\label{prop:clt}
If $\pi(x) > 0 \Rightarrow q(x) > 0$ and $\mathbb{V}_q \left[ \left( \frac{\widetilde{\pi}(x)}{q(x)} \right)^{\alpha} \right] < +\infty$, the estimator $\mathrm{D}_{\alpha}^M$ of the $\alpha$-divergence is $\sqrt{M}$-asymptotically normal, i.e., 
    \begin{align}
    &\sqrt{M} \left( \mathrm{D}_{\alpha}^M(\{\overline{w}^{(m)}\}_{m=1}^M, \{ 1/M\}_{m=1}^M) - \mathrm{D}_{\alpha}(\pi,q) \right) \nonumber \\ 
    &\xrightarrow[N\rightarrow +\infty]{d} \mathcal{N}\left(0, \sigma^2 \right),
\end{align}
with variance
\begin{equation}
    \sigma^2 = \left( \frac{\int \widetilde{\pi}(x)^{2\alpha} q(x)^{1-2\alpha}dx}{\left(\alpha(\alpha-1) \int \widetilde{\pi}(x)^{\alpha}q(x)^{1-\alpha} dx\right)^2} - 1 \right).
\end{equation}
\end{proposition}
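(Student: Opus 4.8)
The plan is to recognize $\mathrm{D}_\alpha^M$ as a smooth function of i.i.d.\ empirical means and push a multivariate CLT through it by the delta method. \cref{prop:divergenceESS} already puts $\mathrm{D}_\alpha^M$ in a convenient form: writing $a(x)=\widetilde\pi(x)/q(x)$, $\widehat A_M=\frac1M\sum_{m=1}^M a(x^{(m)})^\alpha$ and $\widehat B_M=\frac1M\sum_{m=1}^M a(x^{(m)})$ with $x^{(m)}\iidsim q$, cancelling the powers of $M$ in the identity of \cref{prop:divergenceESS} gives $\mathrm{D}_\alpha^M(\{\overline w^{(m)}\}_{m=1}^M,\{1/M\}_{m=1}^M)=\frac{1}{\alpha(\alpha-1)}\big(\widehat A_M\widehat B_M^{-\alpha}-1\big)$. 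Equivalently, up to the factor $\widehat B_M^{\,\alpha-1}$ this is the self-normalized IS estimator of $\mathbb{E}_\pi[(\widetilde\pi/q)^{\alpha-1}]$, which is exactly why a CLT follows by extending standard SNIS results. So I would (i) prove a joint CLT for $(\widehat A_M,\widehat B_M)$, (ii) apply the delta method through $g(a,b)=\frac{1}{\alpha(\alpha-1)}(ab^{-\alpha}-1)$, and (iii) simplify the resulting variance, using the second part of \cref{prop:divergenceESS} to identify the centring constant $g(A^\star,B^\star)=\mathrm{D}_\alpha(\pi,q)$.

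\textbf{Moment conditions.} The summands $(a(x^{(m)})^\alpha,a(x^{(m)}))$ are i.i.d.\ under $q$ with mean $(A^\star,B^\star)=\big(\int\widetilde\pi^\alpha q^{1-\alpha}\,dx,\,Z_\pi\big)$, which is finite and has positive entries because $\pi(x)>0\Rightarrow q(x)>0$ together with $\mathbb{V}_q[(\widetilde\pi/q)^\alpha]<+\infty$. A joint CLT needs finite second moments of these summands: $\mathbb{E}_q[a^{2\alpha}]=\int\widetilde\pi^{2\alpha}q^{1-2\alpha}\,dx<+\infty$ is precisely the stated hypothesis, and $\mathbb{E}_q[a^2]<+\infty$ follows because $\alpha>1$ here ($\alpha=1+\tfrac{2}{\nu+d}$), splitting $\int a^2\,dq$ over $\{a\ge1\}$, where $a^2\le a^{2\alpha}$, and over $\{a<1\}$, which has $q$-mass at most $1$ and on which $a^2\le1$; the off-diagonal entry of the covariance matrix $V$ of $(a^\alpha,a)$ is then finite by Cauchy--Schwarz. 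Hence $\sqrt M\big((\widehat A_M,\widehat B_M)-(A^\star,B^\star)\big)\xrightarrow{d}\mathcal{N}(0,V)$ by the multivariate Lindeberg--Lévy theorem.

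\textbf{Delta method and the variance.} The map $g$ is $C^1$ on the open positive orthant, with $\nabla g(A^\star,B^\star)=\frac{1}{\alpha(\alpha-1)(B^\star)^{\alpha}}\big(1,\,-\alpha A^\star/B^\star\big)$, so the delta method yields $\sqrt M\big(\mathrm{D}_\alpha^M-\mathrm{D}_\alpha(\pi,q)\big)\xrightarrow{d}\mathcal{N}(0,\sigma^2)$ with $\sigma^2=\nabla g(A^\star,B^\star)^\top V\,\nabla g(A^\star,B^\star)$. I expect the last step to be the only delicate one: expanding $\nabla g^\top V\nabla g$ — a combination of $\mathbb{E}_q[a^{2\alpha}]$, $\mathbb{E}_q[a^{\alpha+1}]$, $\mathbb{E}_q[a^2]$ and of $A^\star,B^\star$ — and collecting terms to match the stated closed form $\sigma^2=\int\widetilde\pi^{2\alpha}q^{1-2\alpha}\big/\big(\alpha(\alpha-1)\int\widetilde\pi^\alpha q^{1-\alpha}\big)^2-1$. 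This simplification is cleanest if, rather than expanding $V$ entrywise, one identifies $\widehat A_M/\widehat B_M$ as a genuine SNIS estimator of $\mathbb{E}_\pi[(\widetilde\pi/q)^{\alpha-1}]$, reads off its asymptotic variance from the closed-form SNIS CLT \citep[Chapter~8]{chopin2020introduction}, and then carries the auxiliary $\widehat B_M^{\,\alpha-1}$ factor through one more line of the delta method; everything else is routine.
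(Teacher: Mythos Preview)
Your bivariate-CLT-plus-delta-method route is sound and is the textbook way to obtain a CLT for a smooth function of empirical means such as $g(\widehat A_M,\widehat B_M)=\frac{1}{\alpha(\alpha-1)}\big(\widehat A_M\widehat B_M^{-\alpha}-1\big)$; the paper takes a shorter path. There, the univariate CLT is applied only to the numerator $\widehat A_M=\frac{1}{M}\sum_{m}\big(\widetilde\pi(x^{(m)})/q(x^{(m)})\big)^{\alpha}$, the strong law is used to send $\widehat B_M^{\alpha}\to Z_\pi^{\alpha}$ almost surely, and the conclusion is obtained by Slutsky's theorem. Your approach is more careful in that it tracks the $1/\sqrt{M}$ fluctuations of the denominator as well, and your verification that $\mathbb{E}_q[a^2]<\infty$ follows from $\alpha>1$ together with $\mathbb{E}_q[a^{2\alpha}]<\infty$ is exactly the extra moment check the joint CLT needs (the paper does not spell this out). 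One caveat for your final ``simplify the variance'' step: expanding $\nabla g^\top V\nabla g$ will produce, beyond $\mathbb{E}_q[a^{2\alpha}]$, additional terms in $\mathbb{E}_q[a^{\alpha+1}]$ and $\mathbb{E}_q[a^{2}]$ coming from the covariance of $(\widehat A_M,\widehat B_M)$; these are absent from the displayed $\sigma^2$ precisely because the paper's Slutsky argument freezes the denominator at its limit. So do not be surprised if the algebra does not collapse to the stated closed form --- the asymptotic normality you derive is correct, but your delta-method variance will in general differ from the formula in the statement by those denominator-fluctuation terms. The alternative you sketch (read off the SNIS CLT variance for $\widehat A_M/\widehat B_M$ from \citet{chopin2020introduction} and then carry $\widehat B_M^{\alpha-1}$ through one more delta-method step) lands in the same place.
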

See Appendix \ref{app:proof_ess} for a proof. Next, we detail \textbf{step 3} of \cref{alg:ours}, which relies on $\widehat{\mathrm{ESS}}_{\alpha}$.

\subsection{Tail Adaptation with Bayesian Optimization}\label{sec:adaptingtail}

We now describe how to adapt without generating additional samples the parameter $\nu$ within the optimization problem in \cref{eq:jointMinimization1} (the procedure is further detailed in Appendix \ref{appendix:tailAdaptation}).

The outer problem on $\nu$ consists in minimizing the function $\nu \longmapsto \mathrm{D}_{\alpha(\nu)}(\pi, q_{\mu^{\star}_{\nu},\Sigma^{\star}_{\nu},\nu})$, with $(\mu^{\star}_{\nu},\Sigma^{\star}_{\nu})$ satisfying Eq.~\eqref{eq:optCondStudent}. Although one-dimensional, this problem is difficult as it involves intractable integrals and inner optimization. We propose a Bayesian optimization (BO) approach~\citep{garnett_bayesoptbook_2023}. BO algorithms do not require the computations of derivatives and can cope with noisy estimations of the objective function. Further, they only require a small number of these noisy evaluations, which fits well within our context, since in AMIS \citep{cornuet2012}, the value of $T$ does not need to be large (see \cref{section:experiments} for details). 

To solve \eqref{eq:jointMinimization2} with BO, the main challenge is to approximate at every iteration $t =1,\ldots,T$ the quantity $\mathrm{D}_{\alpha(\nu_t)}(\pi, q_{\mu^{\star}_{\nu},\Sigma^{\star}_{\nu},\nu_t})$. To do so, we first remark that 
\begin{equation}
    (\mu_{t}, \Sigma_{t}) \approx \argmin_{\mu,\Sigma}  \mathrm{D}_{\alpha_t}(\pi, q_{\mu,\Sigma,\nu_t}),
\end{equation}
in the sense that $(\mu_{t}, \Sigma_{t})$ are constructed following \eqref{eq:approxOptCondStudent1}-\eqref{eq:approxOptCondStudent2} which are estimators of the optimality conditions \eqref{eq:optCondStudent}. Then, the quantity $\mathrm{D}_{\alpha_t}(\pi, q_{\mu^{\star}_{\nu_t},\Sigma^{\star}_{\nu_t},\nu_t})$ is approximated by computing the $\alpha_t$-ESS with target $\pi$ and proposal $q_{\mu_t,\Sigma_t,\nu_t}$, following our Proposition \ref{prop:divergenceESS}.

BO algorithms construct a probabilistic model of the function $\nu \longmapsto \mathrm{D}_{\alpha(\nu)}(\pi, q_{\mu^{\star}_{\nu},\Sigma^{\star}_{\nu},\nu})$ in the form of a Gaussian process (GP). At every iteration, the GP is updated with the data $\{\nu_{\tau}, \alpha_{\tau}$-$\mathrm{ESS}\}_{\tau=1}^{t}$, where the values $\alpha_{\tau}$-$\mathrm{ESS}$ are seen as noisy observations of the $\alpha_{k}$-divergence. Then, an acquisition function, which governs the trade-off between exploration and exploitation, is maximized, yielding the next value $\nu_{t+1}$. We use Upper Confidence Bound (UCB) as the acquisition function, which offers theoretical guarantees on cumulative regret by balancing exploration and exploitation with a logarithmic regret bound \citep[Chapter 10]{garnett_bayesoptbook_2023}. As kernel for the GP, we use a standard radial-basis function (RBF) kernel with default parameters. For more details on the BO procedure, see Appendix \ref{appendix:tailAdaptation}.

\textbf{Computational complexity of \ahtis{}.} 
The complexity of \cref{alg:ours} can be analysed by the one of AMIS, and the added complexity given by \textbf{step (3)}, the tail adaptation based on BO. Like AMIS, \ahtis{} requires $\mathcal{O}(M T^2)$ proposal evaluations due to the use of deterministic mixture weighting. While this can be prohibitive for large $T$, we find consistent results with the original AMIS paper \citep{cornuet2012} where $T$ does not need to be very large (between $20$ and $30$ in both our examples and theirs) while $M$ is sufficiently large. This implies that the BO procedure (see Appendix \ref{appendix:tailAdaptation}) is not too expensive in practice, even if cubic in $T$ in theory since the GP is fitted on $T$ examples. Note that the dimension of our BO problem is fixed to $1$, since $\nu$ is a scalar. Thus, the complexity of BO is driven by \textbf{(i)} sequentially updating the GP and \textbf{(ii)} maximizing the acquisition function. Many works in the BO literature aim to reduce these costs, see e.g. \citep[Chapters 9.1, 9.2]{garnett_bayesoptbook_2023}. In our case, UCB is one of the cheapest acquisitions to maximize \citep{wilson2018maximizing}. Finally, previous work has also managed to reduce AMIS complexity to $\mathcal{O}(MTK)$ (for some constant $K$) while keeping high efficiency \citep{el2019efficient} whose techniques also apply to \ahtis{}.

\section{RELATED WORKS}\label{section:relatedwork}

In general, AIS methods do not specifically handle heavy-tailed targets with undefined moments. Although some works use heavy-tailed proposals, to the best of our knowledge, no existing AIS work adapts the tail parameter of a heavy-tailed proposal as in \cref{alg:ours}, while some works in VI do so.

\textbf{AIS.} \cite{wang2022moment} in the context of AIS match the first three moments of skew-Student proposals with the target's moments for adaptation, without adapting $\nu$, requiring $\nu > 3$, and with no connection with $\alpha$-divergences. \cite{korba2022adaptive} introduce an AIS scheme using a mixture of an iteratively adapted kernel density estimator and a safe heavy-tailed distribution, however without detailing the latter's construction. Other AIS works using moment matching mention the use of Student-t distributions, but do not adapt the tail parameter $\nu$ \citep{cornuet2012,portier2018asymptotic}.

\textbf{VI.} \cite{daudel2023monotonic} propose a general VI framework that allows in particular to minimize a fixed $\alpha$-divergence over a mixture of Student-t distributions. The location, scale, and tail parameters of the Student-t distributions are adapted. While we adapt $\nu$ using a BO algorithm, they do so by solving a non-linear equation. However, their procedure may not be able to reach low value of $\nu$, contrary to ours (see Appendix \ref{appendix:tailAdaptation} for a justification), and they did not implement a practical scheme showing experimental results . The work of \cite{wang2018} proposes to minimize an $f$-divergence that is implicitly defined at each iteration by the importance weights of the samples. This is connected with the dependence of the $\alpha$-divergence we minimize on the degree of freedom parameter. However, their goal diverges from ours by focusing on obtaining mass-covering proposals. The minimization of an $\alpha$-divergence (or a Rényi divergence) is also considered in \citep{hernandez2016black, li2016renyi}. In these works, the resulting optimization problem is solved by stochastic gradient descent on a general proposal family, while here we exploit the Student-t assumption to obtain direct optimality conditions. Further, note that VI methods \textbf{(i)} do not use recycling of past samples, \textbf{(ii)} usually yield only a lower bound of $Z_{\pi}$. This is in contrast with the AIS literature, where samples recycling strategies such as DM weighting have been used \citep{10.3150/18-BEJ1042}, allowing to construct $\mathcal{O}(1/M)$-consistent estimates of $Z_{\pi}$.

\begin{figure*}[t]
      \centering
      \begin{subfigure}[t]{.49\linewidth}
      \centering
          \makebox[\textwidth]{\includegraphics[width=0.9\linewidth]{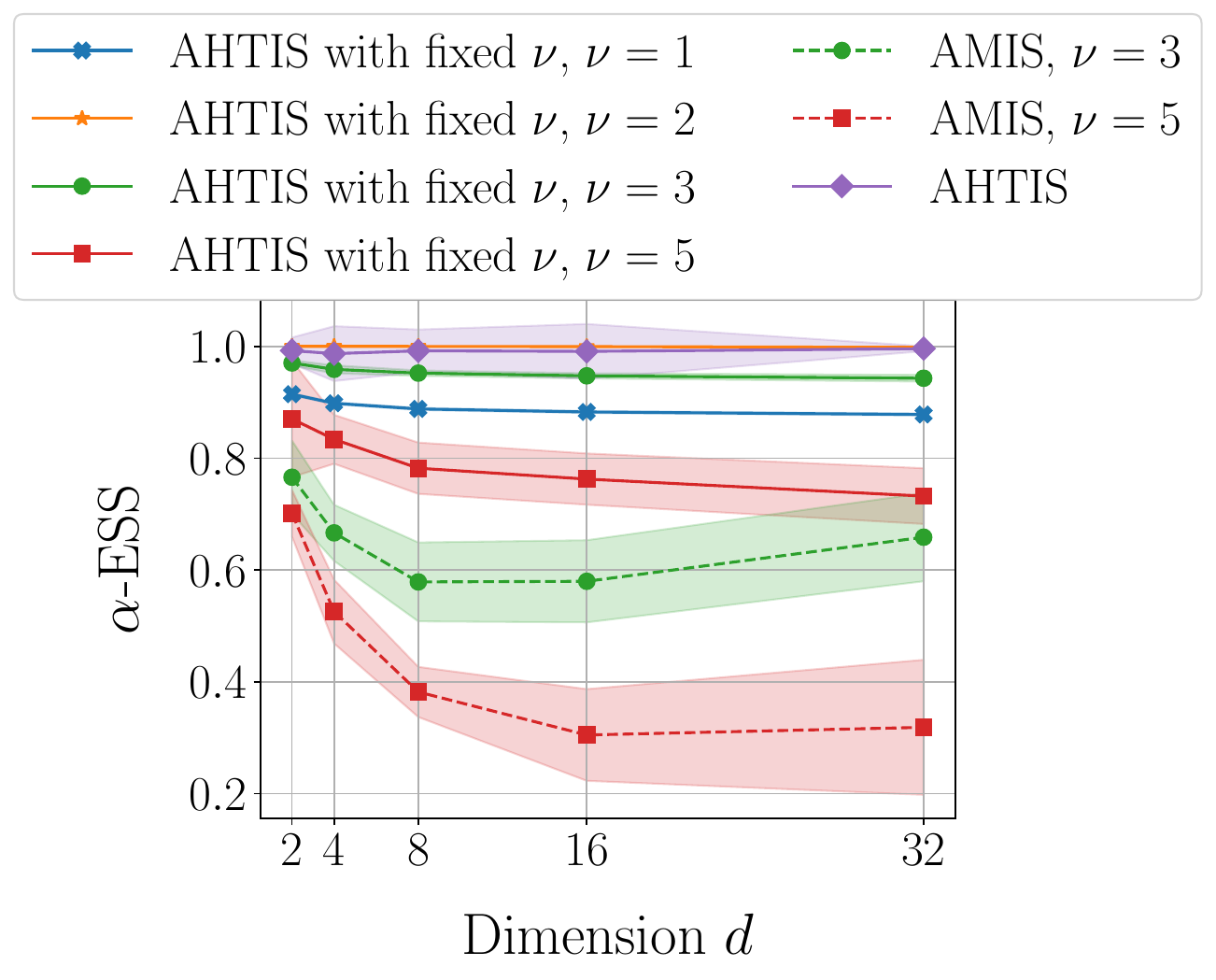}}
            \caption{$\alpha$-ESS (mean $\pm$ one standard deviation, higher is better) for various dimensions $d$. \ahtis{} outperforms AMIS for any $\nu$, sometimes by an order of magnitude, and the $\nu$-adaptive version converges to the true value $\nu_{\pi} = 2$.}
            \label{fig:alphaESS_2dofTarget}
      \end{subfigure} \hfill 
      \begin{subfigure}[t]{.49\linewidth}
      \centering
          \makebox[\textwidth]{\includegraphics[width=0.9\linewidth]{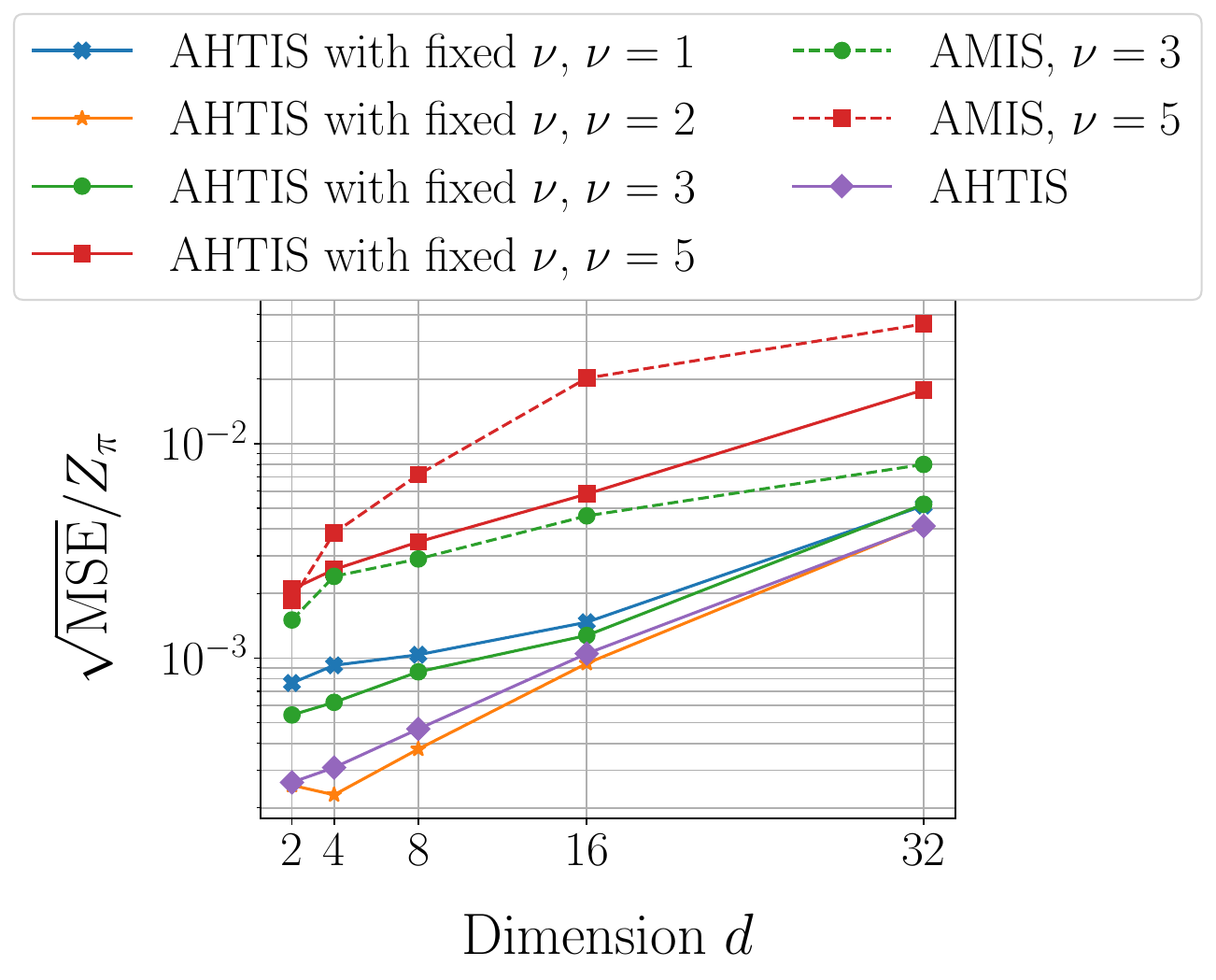}}
                    \caption{Relative square root MSE (lower is better) for various dimensions $d$. Note that $Z_{\pi} = Z_{\nu_{\pi},\Sigma_{\pi}}$ is the true normalizing constant, which is available. \ahtis{} outperforms AMIS for any $\nu$ and the $\nu$-adaptive version converges to $\nu_{\pi} = 2$.}
            \label{fig:MSE_Z_2dofTarget}
      \end{subfigure}
\caption{Results for \cref{subsec:studentTargets}. All algorithms are run for $T = 20$ iterations, with $M=10^4$ samples per iteration and results are averaged over $100$ replications. A dashed line identifies AMIS, while solid line is \ahtis{}, and same marker/color indicates same $\nu$. Recall that AMIS updates are not defined for $\nu \in \{1,2\}$ as the proposal moments are undefined.}
\label{fig:synthetic_res}
\end{figure*}

\section{EXPERIMENTS}
\label{section:experiments}
We demonstrate the benefits of \ahtis{} first on a controlled scenario with synthetic heavy-tailed targets (Student-t distributions of varying dimensions), second on a posterior distribution arising from a Bayesian robust regression problem on clinical trial data.

We evaluate the algorithms using the $\alpha$-ESS metric, shown in \cref{sec:alphaess} to be a theoretically sounded approximation of $\mathrm{D}_{\alpha}$, and the MSE on the estimation of the normalizing constant $Z_{\pi}$, a key distinguishing feature of (A)IS algorithms \citep{llorente2023marginal}.

\subsection{Controlled Scenario with Varying Dimension Student-t Targets}
\label{subsec:studentTargets}
We start with the problem of approximating integrals involving a heavy-tailed Student-t target $\pi$ with $\nu_{\pi} \in \{2,5\}$. Note that the second-order moments of $\pi$ are not defined when $\nu_{\pi} = 2$. The sought target has a location parameter sampled in $\text{Uniform}[-1,1]^d$. Moreover, its scale matrix $\Sigma_{\pi}$ is built so as to reach a condition number $\kappa = 5$, following \citep[Sec. 5]{moré1989}. We consider dimensions $d \in \{2,4,8,16,32\}$. 

We run \ahtis{} and AMIS algorithms for $T=20$ iterations, with $M=10^4$ samples per iteration, following the guidelines from~\citep{cornuet2012}. In the spirit of an ablation study, we analyze the benefits of the tail adaptation in \ahtis{}. That is, we also run \ahtis{} without \textbf{step (3)} of \cref{alg:ours}, $\nu$ being fixed and possibly different from $\nu_{\pi}$.  All algorithms are initialized with $\mu_0$ sampled in $\text{Uniform}[-5,5]^d$ and $\Sigma_0 = 10 I_d$. For \ahtis{} with \textbf{step (3)}, the value $\nu_0 = 1$ is used. Else, the degrees of freedom $\nu \in \{1,2,3,5\}$ are considered for the algorithms without tail adaptation. Note that in the case $\nu\leq 2$ the updates of AMIS are not defined.

\textbf{Results.} The results in terms of the considered metrics are shown in Fig.~\ref{fig:alphaESS_2dofTarget}-\ref{fig:MSE_Z_2dofTarget}. The best performance in both metrics are reached by the $\nu$-adaptive \ahtis{} and by \ahtis{} with $\nu = \nu_{\pi}$. This shows that the $\nu$-adaptive \ahtis{} is able to capture the tail behaviour of the target and confirms the result of \cref{eq:wellPosednessProblem}. When $\nu$ is fixed, \ahtis{} outperforms AMIS in both metrics when $\nu > 2$, and allows in addition to use heavy-tailed proposals with $\nu \leq 2$. Such proposals yield better performance on this heavy-tailed target. We show additional results in Appendix \ref{appendix:furtherNumericalExperiments}, including the case $\nu_{\pi} = 5$ revealing similar behaviours, as well as an analysis of the adaptation of $\nu$ of \ahtis{}.

\begin{figure*}[t]
      \centering
      \begin{subfigure}[t]{.49\linewidth}
      \centering
          \makebox[\textwidth]{\includegraphics[width=\linewidth]{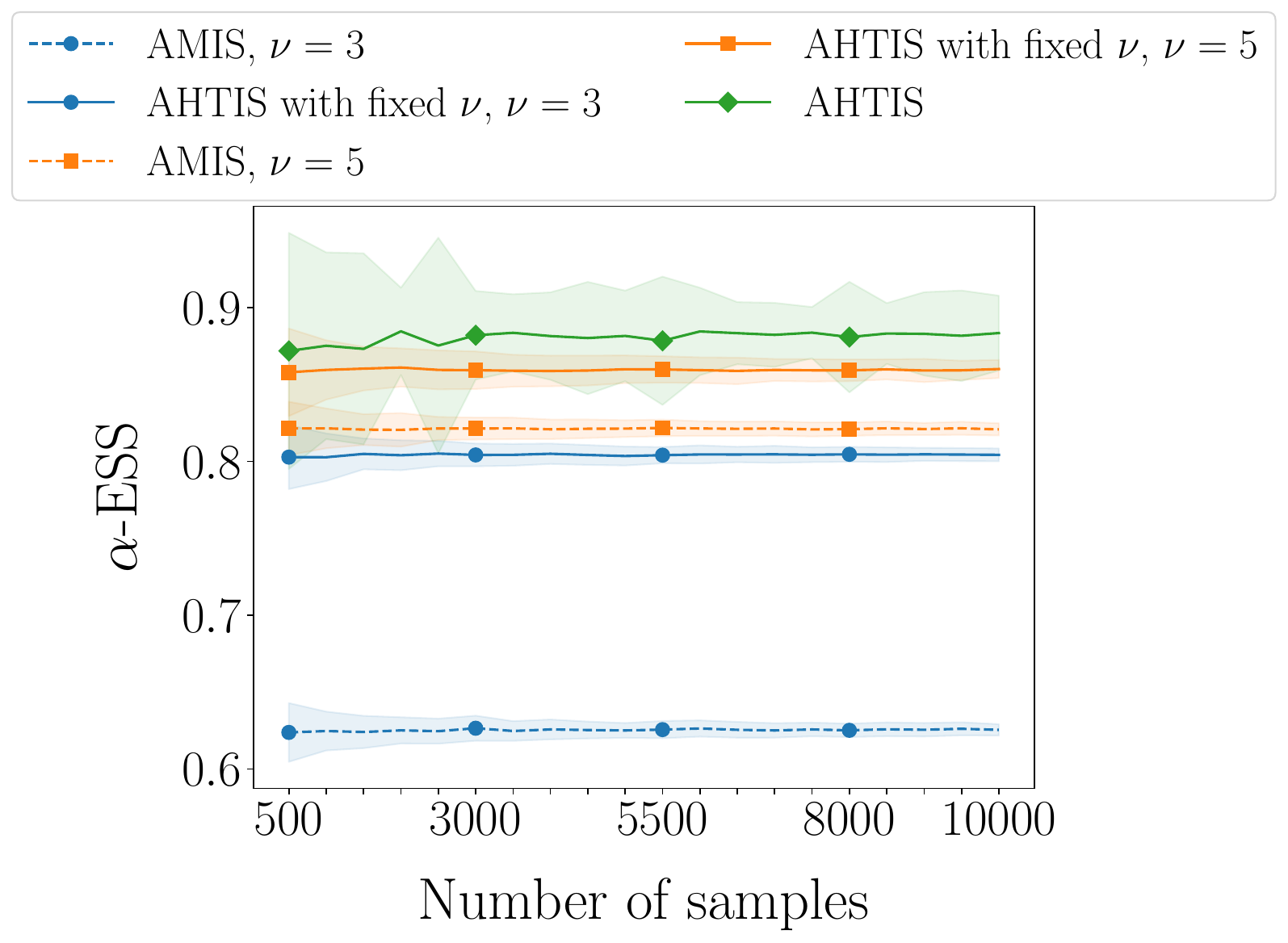}}
            \caption{$\alpha$-ESS (mean $\pm$ one standard deviation, higher is better). \ahtis{} outperforms AMIS with fixed $\nu$. The $\nu$-adaptive \ahtis{} yields better mean, but exhibits a larger variance.}
            \label{fig:alphaESS_real}
      \end{subfigure} \hfill 
      \begin{subfigure}[t]{.49\linewidth}
      \centering
          \makebox[\textwidth]{\includegraphics[width=\linewidth]{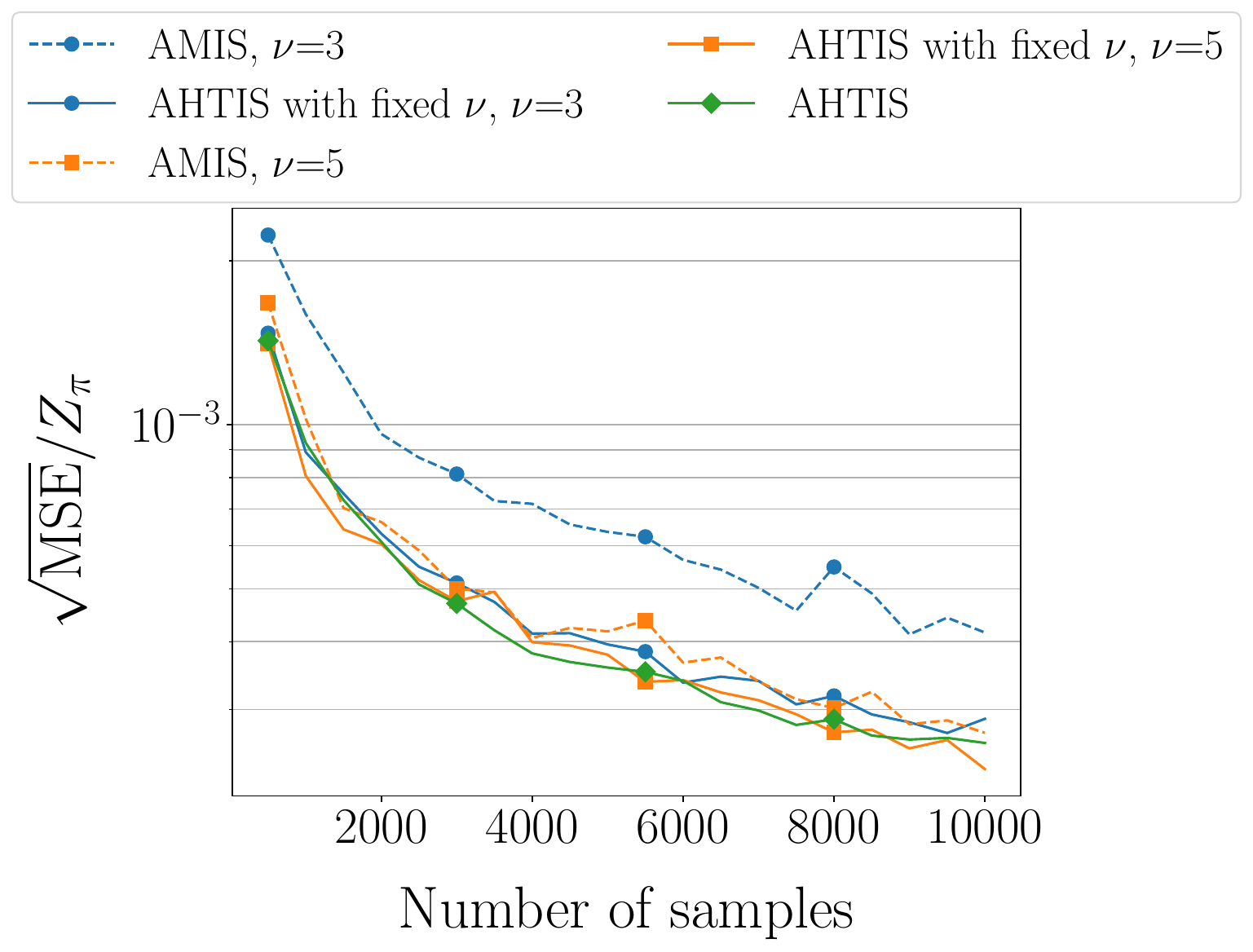}}
                    \caption{In terms of relative square root MSE (lower is better), \ahtis{} outperforms AMIS for any $\nu$ and in this case \ahtis{} reaches the best performance. }
            \label{fig:MSE_Z_real}
      \end{subfigure}
\label{fig:real}
\caption{Results for the creatinine dataset, \cref{exp:real}. All algorithms are run for $T = 25$ iterations, with varying number of samples and results are averaged over $250$ replications. A dashed line identifies AMIS, while solid line is \ahtis{}, and same marker/color indicates same $\nu$.}
\end{figure*}

\subsection{Application to Bayesian Student-t Regression on Real Data}\label{exp:real}
We apply our methodology using, as the target $\pi$, the posterior resulting from a robust regression model on the creatinine dataset \citep{liu1995ml}.\footnote{publicly available at \url{https://github.com/faosorios/heavy/blob/master/data/creatinine.rda}} This dataset has been used to benchmark state-of-the-art VI and MCMC algorithms \citep{DBLP:conf/icml/XuCC23}. It contains the results of a clinical trial on $N=34$ male patients. Such a small number of datapoints makes the inference task challenging, requiring appropriate robust modeling. The regression model assumed in \citep{liu1995ml} to tackle this dataset is a Bayesian Student-t regression for scalar observations $\{y_n\}_{n=1}^N$ representing endogenous creatinine clearance (CR); the covariates $X_n \in \mathbb{R}^{3}$ represent body weight in kg, serum creatinine concentration, and age in years. The goal is to predict CR of the patients. Therefore, the model (which includes an intercept) is given by 
\begin{equation}
    y_n \mid X_n, \beta \iidsim \mathcal{T}([X_n, 1]^{\top} \beta, I_4, 5),
\end{equation}
where $\beta \in \mathbb{R}^{4}$ follows the prior $p_0 = \mathcal{T}(0,I_4,1)$, and $\mathcal{T}(\mu,\Sigma,\nu)$ is the Student-t distribution with location $\mu$, scale $\Sigma$, and $\nu$ degrees of freedom. The posterior pdf $\pi$, with likelihood $p$ and prior pdf $p_0$, is such that
\begin{equation}
    \label{eq:posterior}
    \pi( \beta | \{ X_{n} , y_n \}_{n=1}^{N} ) \propto \left ( \prod_{n=1}^{N}  p(y_{n} | X_{n}, \beta) \right) p_0(\beta).
\end{equation}
The normalization constant of $\pi$, $Z_{\pi}$, is of practical importance as it can be used for model selection and is known as model evidence \citep{mackay1992bayesian}.

We use \ahtis{} and AMIS to approximate the posterior in \eqref{eq:posterior} and approximate $Z_{\pi}$. We use $T=25$ iterations and varying number of samples. In order to obtain a better adaptation of the degree of freedom parameter $\nu$ by \ahtis{} in this case, we optimize the Gaussian process hyperparameters, with regularized maximum likelihood (full details in Appendix \ref{appendix:tailAdaptation}). Algorithms are initialized with $\mu_0$ sampled in $\text{Uniform}[-5,5]^d$, $\Sigma_0 = 4 I_d$ (here, $d=4$). \ahtis{} with adaptation of $\nu$ is initialized with $\nu_0 = 1$ while the algorithms with fixed $\nu$ use $\nu \in \{3,5\}$. 

As before, we evaluate the $\alpha$-ESS and the MSE on the estimation of $Z_{\pi}$. Since we do not have access to the true value of $Z_{\pi}$, we estimate the ground truth using AMIS with $10^5$ samples for $T=25$ iterations and initialised with the Laplace approximation of $\pi$ \citep{mackay1992practical}. Namely, we run AMIS with degree of freedom $\nu = 5$ and initial values $\mu_0 = \argmax_{\beta}  \pi( \beta  | \{ X_{n} , y_n \}_{n=1}^{N} ) $ and $\Sigma_0 = - \left [ \frac{\partial^2 }{\partial \beta^2 } \log \pi( \beta = \mu_0, \{ X_{n} , y_n \}_{n=1}^{N} ) \right ]^{-1} $.

\textbf{Results.} In Fig.~\ref{fig:alphaESS_real}-\ref{fig:MSE_Z_real}, we display the $\alpha$-ESS and the square root relative MSE as functions of the number of samples $M$. 
In this experiment, there is no obvious true value for $\nu$, due to the intractable $\pi$. The $\nu$-adaptive \ahtis{} shows the best mean $\alpha$-ESS values, albeit with a larger variance. 
We expect this to be the case, since \ahtis{} has to learn $\nu$ adaptively with an objective function that, differently to Student-t $\pi$, may exhibit local minimizer and thus may not always reach the global minimizer. In terms of MSE, the best performance is reached by the $\nu$-adaptive \ahtis{}, and second best by \ahtis{} with $\nu=5$ (which motivated using this $\nu$ for the ground truth). Note that when $\nu$ is fixed, \ahtis{} reaches better performance in both metrics than AMIS. We report results with more values of $\nu$ in Appendix \ref{appendix:furtherNumericalExperiments}, with qualitatively similar findings.

\section{CONCLUSIONS}
\label{section:conclusion}
We have proposed \ahtis{}, an AIS framework specifically suited for heavy-tailed target distributions $\pi$, being the first to do so explicitly in the AIS literature. \ahtis{} allows for the adaptation of location, scale, and tail parameter of a Student-t proposal, hereby differing from most previous AIS works. We also explicitly minimize an $\alpha$-divergence between the target and the proposal, in the spirit of VI methods. We showed that the $\alpha$-divergence can be approximated by a quantity involving the $\alpha$-ESS, connecting further AIS and VI algorithms and allowing us to design our tail adaptation method.

Our framework is compatible with the use of mixture proposals when the target is suspected to be multi-modal, and an extension towards this direction is interesting future work. Further, the computational efficiency of the tail adaptation procedure, when a good $\nu$ is not known in advance, could benefit from existing works in the BO literature.

\newpage

\bibliography{references}

\onecolumn 
\appendix 

\section*{Appendix}

In \cref{app:escortExample}, we give an example of the construction of an escort probability density that has lighter tails than the original. In \cref{app:wellPosednessStudentTarget}, we study the well-posedness of our variational formulation of the adaptation of the location, scale, and tail parameters of the proposal. In Section \cref{app:proof_ess}, we give the proofs of our results about the sampling estimation of $\alpha$-divergences. We detail our tail adaptation procedure in \cref{appendix:tailAdaptation}, as well as another tail adaptation procedure proposed recently in the VI literature. Finally, we provide additional numerical experiments in \cref{appendix:furtherNumericalExperiments}.

We run the synthetic experiments on a personal laptop with $7,6$ GB RAM and with 8 Intel Core i$5-8265$U cores. We run the real data experiments on a a personal laptop (MacBook Pro) with 8 cores, M1 Apple Pro chip and $16$ GB RAM.

\section{ILLUSTRATIVE EXAMPLE OF ESCORT DISTRIBUTIONS}
\label{app:escortExample}
To illustrate how the escort version of a pdf makes the tails lighter with a concrete example, we show how the parameters of a Student-t distribution change when considering their escort version. In particular, the following proposition shows that it is possible to construct the escort pdf of a Student-t pdf such that the escort has a higher degree of freedom parameter than the original, and hence a lighter tail.

\begin{proposition}
    \label{prop:escortStudent-t} \citep{MLE-VI-lambdaExpFamilies}
    Consider two Student-t families in dimension $d$ with $\nu_q$ and $\nu$ degrees of freedom, respectively. 
    Then the escort $q_{\mu_q,\Sigma_q,\nu_q}^{(\alpha)}$ of $q_{\mu_q,\Sigma_q,\nu_q}$ with $\alpha = 1 + \frac{2}{\nu+d}$, is a Student-t distribution with $\nu^{(\alpha)}$ degrees of freedom, location $\mu^{(\alpha)}$, and shape $\Sigma^{(\alpha)}$ such that
    \begin{equation}
        \begin{cases}
            \nu^{(\alpha)} = \nu_q + 2 \frac{\nu_q + d}{\nu + d},\\
            \mu^{(\alpha)} = \mu_q,\\
            \Sigma^{(\alpha)} = \frac{\nu_q}{\nu^{(\alpha)}} \Sigma_q.
        \end{cases}
    \end{equation}
\end{proposition}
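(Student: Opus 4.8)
The plan is to compute the escort density directly from \cref{def:escortProb} and recognize the result as a Student-t density by matching functional forms. First I would write, for $\alpha = 1 + \frac{2}{\nu+d}$,
\[
q_{\mu_q,\Sigma_q,\nu_q}^{(\alpha)}(x) \propto q_{\mu_q,\Sigma_q,\nu_q}(x)^{\alpha} \propto \left( 1 + \frac{1}{\nu_q}(x-\mu_q)^{\top}\Sigma_q^{-1}(x-\mu_q) \right)^{-\frac{(\nu_q+d)\alpha}{2}},
\]
where the multiplicative constants are irrelevant since $q^{(\alpha)}$ is renormalized anyway. The right-hand side is, up to normalization, exactly of the Student-t form $\bigl(1 + \tfrac{1}{\nu^{(\alpha)}}(x-\mu^{(\alpha)})^{\top}(\Sigma^{(\alpha)})^{-1}(x-\mu^{(\alpha)})\bigr)^{-\frac{\nu^{(\alpha)}+d}{2}}$ provided the three parameters are chosen consistently, so it suffices to solve for them.

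Second I would identify the parameters. The quadratic form is centered at $\mu_q$, so $\mu^{(\alpha)} = \mu_q$. Matching the outer exponents forces $\tfrac{\nu^{(\alpha)}+d}{2} = \tfrac{(\nu_q+d)\alpha}{2}$, i.e. $\nu^{(\alpha)} + d = (\nu_q+d)\bigl(1 + \tfrac{2}{\nu+d}\bigr)$, which rearranges to $\nu^{(\alpha)} = \nu_q + 2\tfrac{\nu_q+d}{\nu+d}$. With $\nu^{(\alpha)}$ now fixed, matching the two quadratic forms requires $\tfrac{1}{\nu_q}\Sigma_q^{-1} = \tfrac{1}{\nu^{(\alpha)}}(\Sigma^{(\alpha)})^{-1}$ as bilinear forms on $\mathbb{R}^d$ (equality for all $x$), hence as matrices, giving $\Sigma^{(\alpha)} = \tfrac{\nu_q}{\nu^{(\alpha)}}\Sigma_q$, which remains in $\mathcal{S}_{++}^d$ since $\nu^{(\alpha)} > 0$.

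Finally I would verify that the escort is well defined, i.e. that the normalizing constant $\int q_{\mu_q,\Sigma_q,\nu_q}(x)^{\alpha}\,dx$ is finite; this is the one point deserving care. Since $q^{\alpha}$ has tail decay of order $\|x\|^{-(\nu_q+d)\alpha}$, integrability over $\mathbb{R}^d$ holds iff $(\nu_q+d)\alpha > d$, equivalently $\nu^{(\alpha)} > 0$; and $\nu^{(\alpha)} = \nu_q + 2\tfrac{\nu_q+d}{\nu+d}$ is a sum of strictly positive terms, so this always holds. Combined with the parameter identification above, this shows that $q_{\mu_q,\Sigma_q,\nu_q}^{(\alpha)}$ is precisely the claimed Student-t density. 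There is no genuine obstacle: the only subtleties are tracking that renormalization absorbs all multiplicative constants and that pointwise equality of the two quadratic forms legitimately yields equality of the scale matrices.
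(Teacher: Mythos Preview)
Your proof is correct. The paper does not actually supply a proof of this proposition; it is stated in Appendix~\ref{app:escortExample} as a result quoted from \citep{MLE-VI-lambdaExpFamilies}, so there is no in-paper argument to compare against. Your direct computation---raising the Student-t density to the power $\alpha$, matching the exponent to read off $\nu^{(\alpha)}$, matching the quadratic form to read off $\Sigma^{(\alpha)}$, and checking integrability via the tail exponent---is exactly the natural route and is complete.
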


\section{DIVERGENCE AT THE OPTIMUM FOR SUTDENT-T TARGETS}
\label{app:wellPosednessStudentTarget}

We now study the properties of the optimization problem \eqref{eq:jointMinimization1} when the target is a Student-t distribution. In particular, we give the proof of Proposition \ref{eq:wellPosednessProblem}. We also describe in this case the inner problem in \eqref{eq:jointMinimization2} and give an explicit expression of its optimum value, leading to the plot in Figure \ref{fig:optimalRenyiValue}.

\begin{proof}[Proof of Proposition \ref{eq:wellPosednessProblem}]

The $\alpha$-divergence is such that $D_{\alpha}(p,q) \geq 0$ with equality if and only if $p = q$ almost everywhere (for $\alpha > 0$ and $\alpha \neq 1$). Moreover, for any $\nu > 0$, $\alpha(\nu) = 1 +\frac{2}{\nu+d} > 1$. This implies that if $(\mu,\Sigma,\nu)$ is such that
\begin{equation}
    D_{\alpha(\nu)}(\pi, q_{\mu,\Sigma,\nu}) = 0,
\end{equation}
then $(\mu,\Sigma,\nu)$ is a solution of Problem \eqref{eq:jointMinimization1}.

Since $\pi$ is a Student-t distribution, there exists $(\mu^{\star}, \Sigma^{\star}, \nu^{\star})$ such that $q_{\mu^{\star}, \Sigma^{\star},\nu^{\star}} = \pi$. In particular, $\nu^{\star} = \nu_{\pi}$. This implies that
\begin{equation}
    D_{\alpha(\nu^{\star})}(\pi, q_{\mu^{\star}, \Sigma^{\star},\nu^{\star}}) = 0,
\end{equation}
and hence the result.
\end{proof}

We now detail how to compute the function $\nu \longmapsto \mathrm{D}_{\alpha(\nu)}(\pi, q_{\mu^{\star}_{\nu},\Sigma^{\star}_{\nu},\nu})$ when $\pi$ is a Student-t distribution, as it is plotted in Figure \ref{fig:optimalRenyiValue}. To this end, we need to introduce the Rényi entropy of a pdf $p$ that is defined by
\begin{equation}
    H_{\alpha}(p) := \frac{1}{1-\alpha} \log \left( \int p(x)^{\alpha}dx \right).
\end{equation}
We then use this notion to give an explicit expression of our quantity of interest.

\begin{proposition}
    \label{prop:explicitAlphaDiv}
    Consider a target distribution $\pi$ and the family of Student-t distribution with $\nu$ degrees of freedom with $\alpha = 1 + \frac{2}{\nu+d}$. Consider $q_{\mu^{\star}_{\nu},\Sigma^{\star}_{\nu}, \nu}$ such that Equation \eqref{eq:optCondStudent} is satisfied. Then we have
    \begin{equation}
        \alpha(\alpha-1)\mathrm{D}_{\alpha}(\pi, q_{\mu^{\star}_{\nu},\Sigma^{\star}_{\nu}, \nu}) =  \exp \left( (\alpha-1) \left(H_{\alpha}(q_{\mu^{\star}_{\nu},\Sigma^{\star}_{\nu},\nu}) - H_{\alpha}(\pi) \right) \right) -1.
    \end{equation}
\end{proposition}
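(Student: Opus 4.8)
The plan is to expand the defining integral of $\mathrm{D}_{\alpha}(\pi, q_{\mu^{\star}_{\nu},\Sigma^{\star}_{\nu},\nu})$ and recognize that the cross term $\int \pi(x)^{\alpha} q_{\mu^{\star}_{\nu},\Sigma^{\star}_{\nu},\nu}(x)^{1-\alpha}\,dx$ collapses, thanks to the special choice $\alpha = 1 + \frac{2}{\nu+d}$, into a quantity expressible through the escort moments appearing in \cref{eq:optCondStudent}. First I would write out the Student-t density $q_{\mu,\Sigma,\nu}(x) = Z_{\nu,\Sigma}^{-1}\left(1 + \tfrac{1}{\nu}(x-\mu)^{\top}\Sigma^{-1}(x-\mu)\right)^{-(\nu+d)/2}$, so that $q_{\mu,\Sigma,\nu}(x)^{1-\alpha}$ raises the bracket to the power $-\tfrac{(\nu+d)(1-\alpha)}{2} = 1$ (using $(\alpha-1)(\nu+d) = 2$). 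Hence $q_{\mu^{\star}_{\nu},\Sigma^{\star}_{\nu},\nu}(x)^{1-\alpha} = Z_{\nu,\Sigma^{\star}_{\nu}}^{\alpha-1}\left(1 + \tfrac{1}{\nu}(x-\mu^{\star}_{\nu})^{\top}(\Sigma^{\star}_{\nu})^{-1}(x-\mu^{\star}_{\nu})\right)$, an affine-quadratic function of $x$.

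Next I would substitute this into $\int \pi(x)^{\alpha} q_{\mu^{\star}_{\nu},\Sigma^{\star}_{\nu},\nu}(x)^{1-\alpha}\,dx$ and factor out the normalizing constant $C_{\alpha} := \int \pi(x)^{\alpha}\,dx$, rewriting the integral as $C_{\alpha}\, Z_{\nu,\Sigma^{\star}_{\nu}}^{\alpha-1} \int \pi^{(\alpha)}(x)\left(1 + \tfrac{1}{\nu}(x-\mu^{\star}_{\nu})^{\top}(\Sigma^{\star}_{\nu})^{-1}(x-\mu^{\star}_{\nu})\right)dx$, where $\pi^{(\alpha)}$ is the escort of $\pi$. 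The key computation is that the remaining integral against $\pi^{(\alpha)}$ equals $1 + \tfrac{1}{\nu}\,\mathbb{E}_{\pi^{(\alpha)}}\!\left[(x-\mu^{\star}_{\nu})^{\top}(\Sigma^{\star}_{\nu})^{-1}(x-\mu^{\star}_{\nu})\right]$; by the optimality conditions \eqref{eq:optCondStudent}, $\mu^{\star}_{\nu}$ is the mean of $\pi^{(\alpha)}$ and $\Sigma^{\star}_{\nu}$ is its covariance, so this expectation is $\tr\!\left((\Sigma^{\star}_{\nu})^{-1}\Sigma^{\star}_{\nu}\right) = d$. Thus the integral against $\pi^{(\alpha)}$ is exactly $1 + \tfrac{d}{\nu} = \tfrac{\nu+d}{\nu}$, a constant. (A clean way to see the $\tr$ identity is that $\mathbb{E}[(x-m)^{\top}A(x-m)] = \tr(A\,\mathrm{Cov}(x))$ when $m$ is the mean.) I would also record, for the same reason, that $Z_{\nu,\Sigma^{\star}_{\nu}}$ is precisely the normalizing constant of $q_{\mu^{\star}_{\nu},\Sigma^{\star}_{\nu},\nu}$, which lets me relate $Z_{\nu,\Sigma^{\star}_{\nu}}^{\alpha-1}$ to $\int q_{\mu^{\star}_{\nu},\Sigma^{\star}_{\nu},\nu}(x)^{\alpha}\,dx$.

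Finally I would assemble the pieces: $\int \pi^{\alpha} q^{1-\alpha} = C_{\alpha}\, Z_{\nu,\Sigma^{\star}_{\nu}}^{\alpha-1}\cdot\tfrac{\nu+d}{\nu}$. The same bracket-to-power-one trick applied with $\pi$ replaced by $q_{\mu^{\star}_{\nu},\Sigma^{\star}_{\nu},\nu}$ itself gives $\int q_{\mu^{\star}_{\nu},\Sigma^{\star}_{\nu},\nu}(x)^{\alpha}\,dx = Z_{\nu,\Sigma^{\star}_{\nu}}^{-\alpha}\int\left(1+\tfrac1\nu(x-\mu^{\star}_{\nu})^{\top}(\Sigma^{\star}_{\nu})^{-1}(x-\mu^{\star}_{\nu})\right)\cdot q_{\mu^{\star}_{\nu},\Sigma^{\star}_{\nu},\nu}(x)\cdot Z_{\nu,\Sigma^{\star}_{\nu}}\,dx$; wait — more directly, $\int q^{\alpha} = Z_{\nu,\Sigma^{\star}_{\nu}}^{1-\alpha}\cdot\tfrac{\nu+d}{\nu}$ by exactly the same moment identity (now $q$ is literally Student-t with those parameters, so its escort has mean $\mu^{\star}_{\nu}$, covariance $\Sigma^{\star}_{\nu}$ by \cref{prop:escortStudent-t} plus the consistency of \eqref{eq:optCondStudent}). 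Hence $\int\pi^{\alpha}q^{1-\alpha} = \dfrac{\int\pi^{\alpha}\,dx}{\int q^{\alpha}\,dx}\cdot\int q^{\alpha}\,dx \cdot Z_{\nu,\Sigma^{\star}_{\nu}}^{\alpha-1}\big/(\cdots)$ — I will organize this so that the ratio becomes $\left(\int\pi^{\alpha}\right)\big/\left(\int q^{\alpha}\right)^{?}$ matching the Rényi-entropy form. Concretely, using $H_{\alpha}(p) = \tfrac{1}{1-\alpha}\log\int p^{\alpha}$, i.e. $\int p^{\alpha} = \exp((1-\alpha)H_{\alpha}(p))$, the cross term becomes $\exp\!\big((1-\alpha)(H_{\alpha}(\pi) - H_{\alpha}(q_{\mu^{\star}_{\nu},\Sigma^{\star}_{\nu},\nu}))\big)$, and substituting into $\alpha(\alpha-1)\mathrm{D}_{\alpha} = \int\pi^{\alpha}q^{1-\alpha} - 1$ yields the claimed identity. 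The main obstacle is the bookkeeping around the normalizing constants — making sure the powers of $Z_{\nu,\Sigma^{\star}_{\nu}}$ and the factors $\tfrac{\nu+d}{\nu}$ cancel correctly so that only the ratio $\int\pi^{\alpha}\big/\int q^{\alpha}$ survives; the ``miracle'' that both the numerator-type and denominator-type integrals produce the \emph{same} constant $\tfrac{\nu+d}{\nu}$ (because $\Sigma^{\star}_{\nu}$ is simultaneously the escort covariance of $\pi$ and, by construction, matched to $q$) is what makes the cancellation work, and I would state that carefully.
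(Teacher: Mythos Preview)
Your approach is correct and more elementary than the paper's. The paper observes that the escort moments of $\pi$ and of $q_{\mu^{\star}_{\nu},\Sigma^{\star}_{\nu},\nu}$ coincide, then invokes an external Pythagorean-type identity \citep[Eq.~(3.17)]{wong2022} to conclude that the R\'enyi divergence equals $H_{\alpha}(q)-H_{\alpha}(\pi)$, and finally converts R\'enyi to $\alpha$-divergence. You instead compute $\int\pi^{\alpha}q^{1-\alpha}$ directly, exploiting the key algebraic fact that for $\alpha=1+\tfrac{2}{\nu+d}$ the factor $q^{1-\alpha}$ is (up to a constant) the affine--quadratic $1+\tfrac{1}{\nu}(x-\mu^{\star})^{\top}(\Sigma^{\star})^{-1}(x-\mu^{\star})$, whose escort-expectation is forced by \eqref{eq:optCondStudent}. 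This is effectively a self-contained proof of the cited identity in the Student-t case; it buys transparency about \emph{why} the cancellation happens, at the cost of more bookkeeping than the paper's one-line citation.

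One slip to fix: your formula $\int q^{\alpha}=Z_{\nu,\Sigma^{\star}_{\nu}}^{1-\alpha}\cdot\tfrac{\nu+d}{\nu}$ has the last factor inverted. Applying your own identity with $\pi$ replaced by $q$ gives $1=\int q^{\alpha}q^{1-\alpha}=Z^{\alpha-1}\big(\int q^{\alpha}\big)\tfrac{\nu+d}{\nu}$, hence $\int q^{\alpha}=Z^{1-\alpha}\cdot\tfrac{\nu}{\nu+d}$ (and the escort of $q_{\mu^{\star},\Sigma^{\star},\nu}$ does have covariance $\Sigma^{\star}$: by Proposition~\ref{prop:escortStudent-t}, $\nu^{(\alpha)}=\nu+2$ and $\Sigma^{(\alpha)}=\tfrac{\nu}{\nu+2}\Sigma^{\star}$, so its covariance is $\tfrac{\nu+2}{\nu}\cdot\tfrac{\nu}{\nu+2}\Sigma^{\star}=\Sigma^{\star}$). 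With the corrected factor, the cancellation is exact: $\int\pi^{\alpha}q^{1-\alpha}=\big(\int\pi^{\alpha}\big)\big/\big(\int q^{\alpha}\big)=\exp\big((\alpha-1)(H_{\alpha}(q)-H_{\alpha}(\pi))\big)$, and the claim follows. The final paragraph of your write-up would benefit from being rewritten cleanly once this is in hand.
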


\begin{proof}
    We can see from Proposition \ref{prop:escortStudent-t} that Equation \eqref{eq:optCondStudent} implies that $\pi^{(\alpha)}(x) = q_{\mu^{\star}_{\nu},\Sigma^{\star}_{\nu},\nu}^{(\alpha)}(x)$ and $\pi^{(\alpha)}(xx^{\top}) = q_{\mu^{\star}_{\nu},\Sigma^{\star}_{\nu},\nu}^{(\alpha)}(x x^{\top})$. We can deduce from that, and using \cite[Equation (3.17)]{wong2022}, that the $\alpha$-divergence $\mathrm{RD}_{\alpha}$ between $\pi$ and $q_{\mu^{\star}_{\nu},\Sigma^{\star}_{\nu},\nu}$ is such that
    \begin{equation}
        \mathrm{RD}_{\alpha}(\pi, q_{\mu^{\star}_{\nu},\Sigma^{\star}_{\nu},\nu}) = H_{\alpha}(q_{\mu^{\star}_{\nu},\Sigma^{\star}_{\nu},\nu}) - H_{\alpha}(\pi),
    \end{equation}
    where $\mathrm{RD}_{\alpha}$ is the Rényi divergence with parameter $\alpha$. The result follows from the link between the $\alpha$-divergence and the Rényi divergence (see \citep{vanErven2014} for the definition and properties of the Rényi divergence).
\end{proof}

Proposition \ref{prop:explicitAlphaDiv} shows that, in order to compute the quantity plotted in Figure \ref{fig:optimalRenyiValue}, we need to compute explicitly the Rényi entropy of a Student-t distribution and compute explicitly the parameters of  $q_{\mu^{\star}_{\nu}, \Sigma^{\star}_{\nu}, \nu}$. We do so in the following two propositions.

\begin{proposition}
    Consider two degree of freedom parameters $\nu, \nu_q >0$, a dimension $d$, and set $\alpha = 1 + \frac{2}{\nu + d}$. Then, for any $q_{\mu_q,\Sigma_q,\nu_q}$, we have that
    \begin{equation}
        H_{\alpha}(q_{\mu_q,\Sigma_q,\nu_q}) = -\frac{\nu + d}{2} \left( \log Z_{\nu^{(\alpha)}, \Sigma^{(\alpha)}} - \alpha \log Z_{\nu_q, \Sigma_q} \right)
    \end{equation}
    with $\nu^{(\alpha)} = \nu_q + 2 \frac{\nu_q + d}{\nu + d}$ and $\Sigma^{(\alpha)} = \frac{\nu_q}{\nu^{(\alpha)}} \Sigma_q$.
\end{proposition}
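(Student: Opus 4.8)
The plan is to compute the Rényi entropy $H_{\alpha}(q_{\mu_q,\Sigma_q,\nu_q})$ directly, by evaluating the integral $\int q_{\mu_q,\Sigma_q,\nu_q}(x)^{\alpha}\,dx$ and recognizing that the integrand is, up to a multiplicative constant, again a Student-t density whose parameters are exactly the escort parameters from \cref{prop:escortStudent-t}. This is essentially the ``closed under tempering'' property of the Student-t family.

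First I would write $q_{\mu_q,\Sigma_q,\nu_q}(x)^{\alpha} = Z_{\nu_q,\Sigma_q}^{-\alpha}\bigl(1 + \tfrac{1}{\nu_q}(x-\mu_q)^{\top}\Sigma_q^{-1}(x-\mu_q)\bigr)^{-\alpha(\nu_q+d)/2}$ and observe that the exponent $-\alpha(\nu_q+d)/2$ equals $-(\nu^{(\alpha)}+d)/2$ precisely when $\nu^{(\alpha)} = \alpha(\nu_q+d) - d$; using $\alpha - 1 = 2/(\nu+d)$, this rearranges to $\nu^{(\alpha)} = \nu_q + 2\tfrac{\nu_q+d}{\nu+d}$. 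Matching the quadratic forms via $\tfrac{1}{\nu_q}\Sigma_q^{-1} = \tfrac{1}{\nu^{(\alpha)}}(\Sigma^{(\alpha)})^{-1}$ then forces $\mu^{(\alpha)} = \mu_q$ and $\Sigma^{(\alpha)} = \tfrac{\nu_q}{\nu^{(\alpha)}}\Sigma_q$. Hence $q_{\mu_q,\Sigma_q,\nu_q}(x)^{\alpha} = \tfrac{Z_{\nu^{(\alpha)},\Sigma^{(\alpha)}}}{Z_{\nu_q,\Sigma_q}^{\alpha}}\, q_{\mu^{(\alpha)},\Sigma^{(\alpha)},\nu^{(\alpha)}}(x)$, and integrating over $\mathbb{R}^d$ gives $\int q_{\mu_q,\Sigma_q,\nu_q}^{\alpha} = Z_{\nu^{(\alpha)},\Sigma^{(\alpha)}} / Z_{\nu_q,\Sigma_q}^{\alpha}$.

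Plugging this into the definition $H_{\alpha}(p) = \tfrac{1}{1-\alpha}\log \int p^{\alpha}$ and using $\tfrac{1}{1-\alpha} = -\tfrac{\nu+d}{2}$ yields $H_{\alpha}(q_{\mu_q,\Sigma_q,\nu_q}) = -\tfrac{\nu+d}{2}\bigl(\log Z_{\nu^{(\alpha)},\Sigma^{(\alpha)}} - \alpha \log Z_{\nu_q,\Sigma_q}\bigr)$, which is the claimed identity.

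The only points requiring care are checking $\nu^{(\alpha)} > 0$ — immediate, since $\nu_q > 0$ and the added term $2\tfrac{\nu_q+d}{\nu+d}$ is positive — which guarantees that $q_{\mu^{(\alpha)},\Sigma^{(\alpha)},\nu^{(\alpha)}}$ is a genuine density and $Z_{\nu^{(\alpha)},\Sigma^{(\alpha)}}$ is finite (equivalently, the escort normalizer in \cref{def:escortProb} exists), and the elementary algebraic rearrangements of $\nu^{(\alpha)}$ and $\Sigma^{(\alpha)}$. I do not anticipate any genuine obstacle here; the subtlest part is simply bookkeeping the powers of $Z_{\nu_q,\Sigma_q}$ correctly.
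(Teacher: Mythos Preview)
Your proposal is correct and follows essentially the same route as the paper: both arguments rest on the identity $q_{\mu_q,\Sigma_q,\nu_q}(x)^{\alpha} = \dfrac{Z_{\nu^{(\alpha)},\Sigma^{(\alpha)}}}{Z_{\nu_q,\Sigma_q}^{\alpha}}\, q_{\mu^{(\alpha)},\Sigma^{(\alpha)},\nu^{(\alpha)}}(x)$ and then integrate and apply $\tfrac{1}{1-\alpha} = -\tfrac{\nu+d}{2}$. The only cosmetic difference is that the paper invokes \cref{prop:escortStudent-t} directly for this identity, whereas you re-derive the escort parameters by matching exponents and quadratic forms.
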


\begin{proof}
    Using the result of Proposition \ref{prop:escortStudent-t}, we first compute that for any $x \in \mathbb{R}^d$,
    \begin{equation}
        q_{\mu_q, \Sigma_q, \nu_q}(x)^{\alpha} = \frac{Z_{\nu^{(\alpha)}, \Sigma^{(\alpha)}}}{Z_{\nu_q,\Sigma_q}^{\alpha}} q_{\mu^{(\alpha)}, \Sigma^{(\alpha)}, \nu^{(\alpha)}}(x)
    \end{equation}

    From there, we deduce that
    \begin{align*}
        H_{\alpha}(\mu_q, \Sigma_q, \nu_q) &= \frac{1}{1-\alpha} \log \left( \int q_{\mu_q, \Sigma_q, \nu_q}(x)^{\alpha}  dx  \right) \\
        &= \frac{1}{1-\alpha} \left( \log Z_{\nu^{(\alpha)}, \Sigma^{(\alpha)}} - \alpha \log Z_{\nu_q, \Sigma_q} \right)
    \end{align*}
    which gives the result.
\end{proof}

\begin{proposition}
    Consider two degree of freedom parameters $\nu, \nu_{\pi} >0$, a dimension $d$, and set $\alpha = 1 + \frac{2}{\nu + d}$. Then, for any $\pi = q_{\mu_{\pi},\Sigma_{\pi},\nu_{\pi}}$, the Student-t distribution $q_{\mu^{\star}_{\nu}, \Sigma^{\star}_{\nu}, \nu}$ minimizing $(\mu,\Sigma) \longmapsto \mathrm{D}_{\alpha}(\pi,q_{\mu,\Sigma,\nu})$ is such that
    \begin{equation}
        \begin{cases}
            \mu^{\star}_{\nu} = \mu_{\pi},\\
            \Sigma^{\star}_{\nu} = \frac{\nu_{\pi}}{\nu^{(\alpha)} - 2} \Sigma_{\pi},
        \end{cases}
    \end{equation}
    provided that $\nu^{(\alpha)} > 2$.
\end{proposition}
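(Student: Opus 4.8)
The plan is to combine Proposition~\ref{prop:escortMM} with the explicit escort computation of Proposition~\ref{prop:escortStudent-t} applied to the Student-t target itself. Since $\pi = q_{\mu_\pi,\Sigma_\pi,\nu_\pi}$, Proposition~\ref{prop:escortMM} tells us that $\mu^\star_\nu = \pi^{(\alpha)}(x)$ (the mean of the escort) and $\Sigma^\star_\nu = \pi^{(\alpha)}(xx^\top) - \mu^\star_\nu \mu^{\star\top}_\nu$ (the covariance of the escort), where $\alpha = 1 + \tfrac{2}{\nu+d}$. So the whole computation reduces to: identify the escort $\pi^{(\alpha)}$ explicitly, then read off its mean and covariance.

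First I would invoke Proposition~\ref{prop:escortStudent-t} with $\nu_q = \nu_\pi$, which says that $\pi^{(\alpha)} = q^{(\alpha)}_{\mu_\pi,\Sigma_\pi,\nu_\pi}$ is again a Student-t distribution, now with location $\mu^{(\alpha)} = \mu_\pi$, scale matrix $\Sigma^{(\alpha)} = \tfrac{\nu_\pi}{\nu^{(\alpha)}}\Sigma_\pi$, and degrees of freedom $\nu^{(\alpha)} = \nu_\pi + 2\tfrac{\nu_\pi+d}{\nu+d}$. From the moment facts stated right after the Student-t definition, a Student-t with location $m$, scale $S$, and degrees of freedom $\nu'$ has mean $m$ (when $\nu' > 1$) and covariance $\tfrac{\nu'}{\nu'-2} S$ (when $\nu' > 2$). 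Applying this to $\pi^{(\alpha)}$: its mean is $\mu_\pi$, hence $\mu^\star_\nu = \mu_\pi$; and its covariance is $\tfrac{\nu^{(\alpha)}}{\nu^{(\alpha)}-2} \cdot \tfrac{\nu_\pi}{\nu^{(\alpha)}}\Sigma_\pi = \tfrac{\nu_\pi}{\nu^{(\alpha)}-2}\Sigma_\pi$, hence $\Sigma^\star_\nu = \tfrac{\nu_\pi}{\nu^{(\alpha)}-2}\Sigma_\pi$. The condition $\nu^{(\alpha)} > 2$ is exactly what is needed for the escort's second moment to exist, which is the hypothesis of Proposition~\ref{prop:escortMM} (finite first and second-order moments of $\pi^{(\alpha)}$; note $\nu^{(\alpha)} = \nu_\pi + 2\tfrac{\nu_\pi+d}{\nu+d} > \nu_\pi \geq 1$ already when... well, $\nu^{(\alpha)}>0$ always, but $>2$ is the substantive requirement and must be assumed, as stated).

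I do not anticipate a serious obstacle here — the proof is essentially a two-line substitution once Propositions~\ref{prop:escortMM} and~\ref{prop:escortStudent-t} are in hand. The only point requiring a little care is making sure the covariance normalization is handled correctly: one must not confuse the \emph{scale matrix} $\Sigma^{(\alpha)}$ of the escort Student-t with its \emph{covariance matrix}, which carries the extra factor $\tfrac{\nu^{(\alpha)}}{\nu^{(\alpha)}-2}$; it is the covariance that enters the second-moment matching condition $\Sigma^\star_\nu = \pi^{(\alpha)}(xx^\top) - \mu^\star_\nu\mu^{\star\top}_\nu$, and the cancellation of $\nu^{(\alpha)}$ is what produces the clean form $\tfrac{\nu_\pi}{\nu^{(\alpha)}-2}\Sigma_\pi$. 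I would also note in passing that this computation, together with Proposition~\ref{prop:explicitAlphaDiv} and the Rényi entropy formula, is what feeds into the curve plotted in Figure~\ref{fig:optimalRenyiValue}.

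\begin{proof}
Since $\pi = q_{\mu_\pi,\Sigma_\pi,\nu_\pi}$, Proposition~\ref{prop:escortMM} applies and gives
\begin{equation}
    \mu^\star_\nu = \pi^{(\alpha)}(x), \qquad \Sigma^\star_\nu = \pi^{(\alpha)}(xx^\top) - \mu^\star_\nu \mu^{\star\top}_\nu,
\end{equation}
i.e., $\mu^\star_\nu$ and $\Sigma^\star_\nu$ are the mean and covariance of the escort pdf $\pi^{(\alpha)}$, where $\alpha = 1 + \frac{2}{\nu+d}$. By Proposition~\ref{prop:escortStudent-t} with $\nu_q = \nu_\pi$, the escort $\pi^{(\alpha)}$ is itself a Student-t distribution with location $\mu_\pi$, scale matrix $\frac{\nu_\pi}{\nu^{(\alpha)}}\Sigma_\pi$, and degrees of freedom $\nu^{(\alpha)} = \nu_\pi + 2\frac{\nu_\pi + d}{\nu+d}$. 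A Student-t distribution with location $m$, scale matrix $S$, and $\nu' > 2$ degrees of freedom has mean $m$ and covariance $\frac{\nu'}{\nu'-2}S$. Applying this to $\pi^{(\alpha)}$ (which requires $\nu^{(\alpha)} > 2$) yields
\begin{equation}
    \mu^\star_\nu = \mu_\pi, \qquad \Sigma^\star_\nu = \frac{\nu^{(\alpha)}}{\nu^{(\alpha)}-2}\cdot\frac{\nu_\pi}{\nu^{(\alpha)}}\Sigma_\pi = \frac{\nu_\pi}{\nu^{(\alpha)}-2}\Sigma_\pi,
\end{equation}
which is the claimed form.
\end{proof}
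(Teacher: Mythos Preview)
Your proof is correct and follows exactly the paper's approach: invoke Proposition~\ref{prop:escortMM}, identify $\pi^{(\alpha)}$ as a Student-t via Proposition~\ref{prop:escortStudent-t}, and then use the standard mean and covariance formulas for a Student-t. In fact you spell out the cancellation $\frac{\nu^{(\alpha)}}{\nu^{(\alpha)}-2}\cdot\frac{\nu_\pi}{\nu^{(\alpha)}}=\frac{\nu_\pi}{\nu^{(\alpha)}-2}$ and the distinction between scale and covariance more carefully than the paper's one-sentence sketch does.
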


\begin{proof}
    This comes from the optimality result of Proposition \ref{prop:escortMM}, the characterization of ${\pi}^{(\alpha)}$ as a Student-t distribution with parameters given in Proposition \ref{prop:escortStudent-t}, and the fact that for any $q_{\mu,\Sigma,\nu}$ with $\nu>2$, $q_{\mu,\Sigma,\nu}(x) = \mu$ and $q_{\mu,\Sigma,\nu}(xx^{\top}) = \frac{\nu}{\nu-2}\Sigma$.
\end{proof}

Gathering these three results, we can then get a closed-form expression for the function 
\begin{equation}
    \nu \longmapsto \min_{\mu,\Sigma} \mathrm{D}_{\alpha(\nu)}(\pi, q_{\mu,\Sigma,\nu})
\end{equation}
when $\pi$ is Student-t distribution for some $\nu_{\pi} > 0$. Then, one can use it to draw Figure \ref{fig:optimalRenyiValue}.

\section{PROOFS OF \cref{sec:alphaess}}
\label{app:proof_ess}

We give below the proofs of Propositions \ref{prop:divergenceESS} and \ref{prop:clt}, that describe the approximation of $\alpha$-divergences by a self-normalized importance sampling estimator. This estimator is linked with the $\alpha$-ESS and the (discrete) $\alpha$-divergence between the normalized importance weights and the corresponding uniform weights.

\begin{proof}[Proof of \cref{prop:divergenceESS}]
We derive the following self-normalized IS (SNIS) approximation of the $\alpha$-divergence, making an ESS-like quantity appear:
\begin{align}
    \mathrm{D}_{\alpha}(\pi, q) =&  \frac{1}{\alpha (\alpha-1)} \left ( \frac{\int \left ( \frac{\widetilde{\pi}(x)}{q(x)} \right )^\alpha q(x) dx  } {Z_{\pi}^{\alpha}} -1 \right ) \label{eq:before_snis} \\ 
    \approx& \frac{1}{\alpha (\alpha-1)}  \left (\frac{\frac{1}{M} \sum_{m=1}^{M} \left (\frac{\widetilde{\pi}(x^{(m)})}{q(x^{(m)})} \right )^\alpha}{ \left ( \frac{1}{M} \sum_{m=1}^{M} \frac{\widetilde{\pi}(x^{(m)})}{q(x^{(m)})}  \right )^\alpha } - 1 \right )\label{eq:after_snis} \\ 
    =& \frac{1}{\alpha (\alpha-1)} \left ( M^{\alpha - 1}  \sum_{m=1}^{M} \overline{w}_{m}^\alpha - 1  \right )\label{eq:discreteAlphaDiv} \\ 
    =& \frac{ M^{ \alpha - 1}}{\alpha (\alpha - 1 )} \left ( \sum_{m=1}^{M} \overline{w}_{m}^\alpha -  M^{1- \alpha}  \right ) \\ 
    =&  \frac{ M^{ \alpha - 1}}{\alpha (\alpha - 1 )} \left ( (ESS_{\alpha})^{1 - \alpha} - M^{1 - \alpha} \right ) .
\end{align}

Notice the SNIS approximation from \cref{eq:before_snis} to \cref{eq:after_snis}, i.e., the same set of samples is used to approximate a ratio of two integrals. Moreover, we can recognize from \cref{eq:discreteAlphaDiv} that 
\begin{equation}
    \mathrm{D}_{\alpha}(\pi,q) \approx \mathrm{D}^M_{\alpha}(\{\overline{w}^{(m)}\}_{m=1}^M, \{ 1/M\}_{m=1}^M),
\end{equation}
with the continuous $\alpha$-divergence on the left and the discrete $\alpha$-divergence on the right. We also have the almost sure convergence $\mathrm{D}^M_{\alpha}(\{\overline{w}^{(m)}\}_{m=1}^M, \{ 1/M\}_{m=1}^M) \xrightarrow[M \rightarrow +\infty]{a.s.} \mathrm{D}_{\alpha}(\pi,q)$ from standard SNIS results (see for instance \cite[Theorem 9.2]{mcbook}).

\end{proof}

\begin{proof}[Proof of \cref{prop:clt}]
We first compute the gap
\begin{equation}
    D_{\alpha}^M(\{\overline{w}^{(m)}\}_{m=1}^M, \{ 1/M\}_{m=1}^M) - D_{\alpha}(\pi,q) = \frac{1}{\alpha (\alpha-1)} \left( \frac{\frac{1}{M} \sum_{m=1}^{M} \left (\frac{\widetilde{\pi}(x^{(m)})}{q(x^{(m)})} \right )^\alpha}{ \left ( \frac{1}{M} \sum_{m=1}^{M} \frac{\widetilde{\pi}(x^{(m)})}{q(x^{(m)})}  \right )^\alpha } - \frac{\int \left ( \frac{\widetilde{\pi}(x)}{q(x)} \right )^\alpha q(x) dx}{\left( \int \Tilde{\pi}(x)dx \right)^{\alpha}}  \right).
\end{equation}

We now deal with the denominator. Due to our hypothesis $\pi(x) > 0 \Rightarrow q(x) > 0$, we have that $\frac{1}{M} \sum_{m=1}^{M} \frac{\widetilde{\pi}(x^{(m)})}{q(x^{(m)})} \xrightarrow[M \rightarrow +\infty]{a.s.} Z_{\pi}$, from which we deduce the following almost sure convergence:
\begin{equation}
    \label{eq:almostSureProofCLT}
    \left(\frac{1}{M} \sum_{m=1}^{M} \frac{\widetilde{\pi}(x^{(m)})}{q(x^{(m)})}\right)^{\alpha} \xrightarrow[M \rightarrow +\infty]{a.s.} \left(\int \Tilde{\pi}(x)dx\right)^{\alpha}.
\end{equation}

We now turn to the numerator. The quantity $\frac{1}{M}\sum_{m=1}^M \left( \frac{\widetilde{\pi}(x^{(m)}}{q(x^{(m)})}\right)^{\alpha}$ is an unbiased Monte Carlo estimator of $\int \left ( \frac{\widetilde{\pi}(x)}{q(x)} \right )^\alpha q(x) dx$ with the variance of each term of the sum being equal to
\begin{equation}
    \mathbb{V}_q \left[ \left( \frac{\widetilde{\pi}(x)}{q(x)} \right)^{\alpha} \right] = \int \widetilde{\pi}(x)^{2\alpha} q(x)^{1-2\alpha} dx - \left( \int \left ( \frac{\widetilde{\pi}(x)}{q(x)} \right )^\alpha q(x) dx \right)^2.
\end{equation}
We have by the central limit theorem for Monte Carlo estimators that
\begin{equation}
    \label{eq:cltProofCLT}
    \sqrt{M} \left(\frac{1}{M}\sum_{m=1}^M \left( \frac{\widetilde{\pi}(x^{(m)}}{q(x^{(m)})}\right)^{\alpha} - \int \left ( \frac{\widetilde{\pi}(x)}{q(x)} \right )^\alpha q(x) dx \right) \xrightarrow[M \rightarrow +\infty]{d} \mathcal{N} \left(0, \mathbb{V}_q \left[ \left( \frac{\widetilde{\pi}(x)}{q(x)} \right)^{\alpha} \right] \right).
\end{equation}

Then, using Eq.~\eqref{eq:almostSureProofCLT}-\eqref{eq:cltProofCLT} and Slutsky's theorem, we obtain that
\begin{align}
    &\sqrt{M} \left( D_{\alpha}^M(\{\overline{w}^{(m)}\}_{m=1}^M, \{ 1/M\}_{m=1}^M) - D_{\alpha}(\pi,q) \right)\nonumber\\
    \xrightarrow[N\rightarrow +\infty]{d} &\:\mathcal{N}\left(0 , \left( \frac{\int \widetilde{\pi}(x)^{2\alpha} q(x)^{1-2\alpha}dx}{\left( \alpha(\alpha-1)\int \widetilde{\pi}(x)^{\alpha}q(x)^{1-\alpha}dx \right)^2} - 1 \right) \right),
\end{align}
which yields the result.
\end{proof}

\section{TAIL ADAPTATION}
\label{appendix:tailAdaptation}

\subsection{Our Tail Adaptation Procedure with Bayesian Optimization} 

We present below in more details our proposed tail adaptation procedure.

\begin{algorithm}[H]
    \caption{Tail adaptation with BO}
    \label{alg:tailadapt}
    \begin{algorithmic}[1]
    \REQUIRE \begin{itemize}
    \item Current tail parameter and $\alpha\text{-}\mathrm{ESS}$, i.e., $\nu_t,\widehat{\mathrm{ESS}}_{\alpha_{t}}$
    \item Previous tail parameters and $\alpha\text{-}\mathrm{ESS}$ values $\{ \nu_{\tau}, \widehat{\mathrm{ESS}}_{\alpha_{\tau}} \}_{\tau=1}^{t-1}$
    \item Choice of parameterized kernel function $k(\nu, \nu^\prime; \theta)$
    \item Choice of parameterized acquisition function $\mathrm{acq}(\nu; \psi; \mathcal{GP})$ 
    \item \textbf{(Optional):} Choice of prior distribution for $\theta$, $p(\theta)$
    \item \textbf{(Optional):} Choice of prior distribution for the observation noise $\sigma^2$, $p(\sigma^2)$
    \end{itemize}
    \STATE Use transformed values of $\widehat{\mathrm{ESS}}_{\alpha_{\tau}}$ with the following monotonic transformation 
    \begin{equation}\label{eq:monotonic_transform}
        y_{\tau} =  \log \left ( 1 - \left ( \frac{1}{M} \widehat{\mathrm{ESS}}_{\alpha_{\tau}} \right ) \right ) , ~~ \tau = 1,\dots, t
    \end{equation}
    \STATE Update Gaussian process $\mathcal{GP}_{t}$ at current iteration $t$ with new datapoint $\{ \nu_t, y_t \}$
    \STATE Obtain new tail parameter $\nu_{t+1}$ by maximizing the acquisition function, \begin{equation}\label{eq:maximizing_acquisition}
        \nu_{t+1} \gets  \argmax_{\nu} \mathrm{acq}(\nu; \psi; \mathcal{GP}_{t})
    \end{equation}
    \STATE \textbf{(Optional) Gaussian process hyperparameter optimization:} We model $\{ y_{\tau} \}_{\tau=1}^{t}$ as \emph{noisy} observations of the true (transformed) $\alpha$-ESS from the $\mathcal{GP}$ with additive Gaussian noise with variance $\sigma^2$.  Letting the observations be $y= [y_1, \dots, y_t]$ and $K_{t}$ be the $t \times t$ matrix with entries $k(\nu_{\tau}, \nu_{\tau^{\prime}}; \theta)$ for $(\tau,\tau^{\prime}) \in \{ 1, \dots, t \} \times  \{ 1, \dots, t \}$, optimize $\theta$ and noise $\sigma^2$ to maximize the log-likelihood of the observed data, as
    \begin{equation}\label{eq:hyperparam_optimization}
        \theta, \sigma^2 \gets \argmax_{\theta, \sigma^2} -\frac{1}{2} \log | \mathrm{det}\left [2 \pi\left( {K}_{t}+\sigma^2  {I}\right) \right ] | -\frac{1}{2}  {y}^{\top}\left( {K}_{t}+\sigma^2  {I}\right)^{-1}  {y} + \log p(\theta) + \log p(\sigma^2)
    \end{equation}
    \STATE \textbf{Return: $\nu_{t+1}$} 
\end{algorithmic}
\end{algorithm}
We describe below all implementation details regarding \cref{alg:tailadapt}.

\textbf{Kernel function.} We used the perhaps most common kernel function in BO, i.e., the radial basis function (RBF) kernel, also known as exponentiated quadratic (EQ) or squared exponential (SE) \citep{garnett_bayesoptbook_2023}. For a one dimensional input as $\nu$, the SE kernel has two scalar parameters, lengthscale $l$ and function variance $\sigma_f^{2}$, i.e., $\theta = \{ l ,\sigma_{f}^2 \}$, and its expression is given by 
\begin{equation}
    k_{\text{SE}}(\nu, \nu^\prime; \theta) = \sigma_{f}^2 \cdot \exp \left ( - \frac{1}{2} \frac{(\nu - \nu^\prime )^2}{l^2}  \right ) .
\end{equation}
The lengthscale $l$ indicates the typical distance between turning points in the function, while the intuition for $\sigma_f$ is that by seeing a long enough horizontal stretch of the function, $\approx 2/3$ of the points would lie between $\pm \sigma_f$ of the GP mean.

\textbf{Acquisition function.} We experimented using the Gaussian process upper confidence bxound (GP-UCB) \citep{srinivas2009gaussian}, a parameterized (by $\psi$) acquisition function, $\mathrm{acq}$, with only one scalar tuning parameter $\psi = \{\beta
\}, \beta > 0$ given by 
\begin{equation}\label{eq:ucb}
    \nu^{\text{UCB-best}}_{t+1} = \argmax_{\nu \in [1, \nu_{\text{max}}]} \mu_{\mathcal{GP}_t}(\nu) + \beta^{1/2} v_{\mathcal{GP}_t}^2(\nu) ,
\end{equation}
where, defining $k_t(\nu) $ as the vector-valued function $ \nu \rightarrow [k(\nu, \nu_{1}), \dots , k(\nu,\nu_t)]$ (and omitting kernel parameters $\theta$ for brevity), 
\begin{align}
     \mu_{\mathcal{GP}_t}(\nu) &= k_{t}(\nu)^\top (K_{t} + \sigma^2 I)^{-1} y_t ,\\
     v_{\mathcal{GP}_t}^2(\nu) &= k(\nu,\nu) - k_{t}(\nu)^\top (K_{t} + \sigma^2 I)^{-1}  k_{t}(\nu) . 
\end{align}
The $\beta$ parameter controls the typical exploration and exploitation tradeoff needed. To set $\beta$, we followed the theoretical guarantees described by \citep[Chapter 10, page 229]{garnett_bayesoptbook_2023}; letting the search space for $\nu$ be $\mathcal{V} = [1, \nu_{\text{max}}]$ and $t$ for the BO iteration number (corresponding to $t$ in our \ahtis{} algorithm), we selected 
\begin{equation}\label{eq:beta_selec}
    \beta_{t}^{\star} = \sqrt{2 \log \left(\frac{\left(t^2+1\right)|\mathcal{V}|}{\sqrt{2 \pi}}\right)} 
\end{equation}
for the synthetic experiments. For the real data experiments, we used $\beta_{t} = 1.5 \cdot \beta_{t}^{\star}$ for higher exploration due to a much noisier and more challenging objective function.
As search space for \cref{eq:ucb}, we used $\nu_{\text{max}} = 10$.

\textbf{Hyperparameter priors.} 
As described in the main paper, for the real data experiments we optimized the GP hyperparameters at each iteration (\textbf{step (4)} of \cref{alg:tailadapt}) using a prior for both $\theta = {l, \sigma_{f}^2}$ and $\sigma$. For all these parameters, we used an inverse Gamma prior, 
\begin{equation}
    p(\sigma^2 | \alpha, \beta) = \frac{\beta^\alpha}{\Gamma(\alpha)} (\sigma^2)^{-(\alpha + 1)} e^{-\frac{\beta}{\sigma^2}} , 
\end{equation}
where \( \Gamma(\cdot) \) is the gamma function, (omitting equivalent equations for $l$ and $\sigma_{f}^2$) with $\alpha$ and $\beta$ selected such that $\mathbb{E}[\sigma^2] = 3, \mathbb{V}[\sigma^2] = 2$; $\mathbb{E}[\sigma_{f}^2] = 5, \mathbb{V}[\sigma_{f}^2] = 2$; $\mathbb{E}[l] = 5, \mathbb{V}[l] = 2$.

Finally, to implement all of the above steps we used the Python library Emukit \citep{emukit2019,emukit2023}. 

\subsection{Another Tail Adaptation Method}
\cite{daudel2023monotonic} propose a VI method for minimizing a fixed $\alpha$-divergence over a mixture of Student-t distributions in \citep[Example 5]{daudel2023monotonic}. For each component of the mixture, the location, scale, and tail parameters are all adapted. We now show that their tail-adaptation procedure is not able to produce degree of freedom parameters that are less than a constant $\nu_{\textrm{min}} \in (2.5, 2.6)$.

In order to observe that, we consider the update \citep[Equation (70)]{daudel2023monotonic}. For simplicity, we consider the case where the mixture is reduced to one component, but our analysis still applies in this more general setting. In the simplified setting we consider, we have at iteration $t$ that the next degree of freedom parameter $\nu_{t+1}$ satisfies
\begin{equation}
    \label{eq:tailUpdateKamelia}
    \kappa \left(\frac{\nu_{t+1}}{2}\right) = \int (z - \ln(z)) p_{\mu_t, \Sigma_t,\nu_t}(y,z,dy,dz),
\end{equation}
with $\kappa(x) = \ln(x) + \frac{\Gamma'(x)}{\Gamma(x)}$, $\alpha \in [0,1)$, and a positive measure $p_{\mu_t, \Sigma_t,\nu_t}$ over $\mathbb{R} \times \mathbb{R}^d$. For any $z > 0$, we have $z - \ln(z) \geq 1$. We  can thus check that the right-hand side of Eq.~\eqref{eq:tailUpdateKamelia} is positive. The function $\kappa$ is increasing and bijective from $(0,+\infty)$ to $\mathbb{R}$ from \cite[Lemma 13]{daudel2023monotonic}.

Now let us demonstrate that there exists a scalar $\nu_{\textrm{min}} > 0$ such that $\nu_{t+1} > \nu_{\textrm{min}}$ and give some bounds on $\nu_{\textrm{min}}$. We define $\nu_{\textrm{min}}$ such that $\kappa \left( \frac{\nu_{\textrm{min}}}{2} \right) = 0$. The function $\kappa$ is increasing and bijective from $(0,+\infty)$ to $\mathbb{R}$ from \cite[Lemma 13]{daudel2023monotonic}. We can check that that $\kappa(1.25) < 0$ and that $\kappa(1.3) > 0$. This means that the scalar $\nu_{\textrm{min}}$ exists and satisfies $\nu_{\textrm{min}} \in (2.5, 2.6)$. This shows that there are values of $\nu$ that cannot be attained by the algorithm of \cite{daudel2023monotonic}. Although this lower bound is reasonable, it may not yield optimal performance on heavy-tailed targets such as the one considered in Section \ref{subsec:studentTargets}.

\section{FURTHER NUMERICAL EXPERIMENTS}
\label{appendix:furtherNumericalExperiments}

\subsection{Controlled Scenario with Varying Dimension Student-t Targets}

We give here supplementary numerical experiments in the case of a Student-t target distribution in varying dimension, that is described in \cref{subsec:studentTargets}. In addition to the results already presented in \cref{subsec:studentTargets}, we show in \cref{fig:StudentTarget_5dofTarget} the $\alpha$-ESS and square-root relative MSE on the normalization constant of the target when the target has degree of freedom $\nu_{\pi} = 5$. We also describe the final degree of freedom parameters reached by \ahtis{} with adaptation of $\nu$ when the target has degree of freedom parameter $\nu_{\pi} \in \{2,5\}$ in Table \ref{tab:dof_adaptation}.

\begin{figure}[H]
      \centering
      \begin{subfigure}[t]{.49\linewidth}
      \centering
          \includegraphics[width=\linewidth]{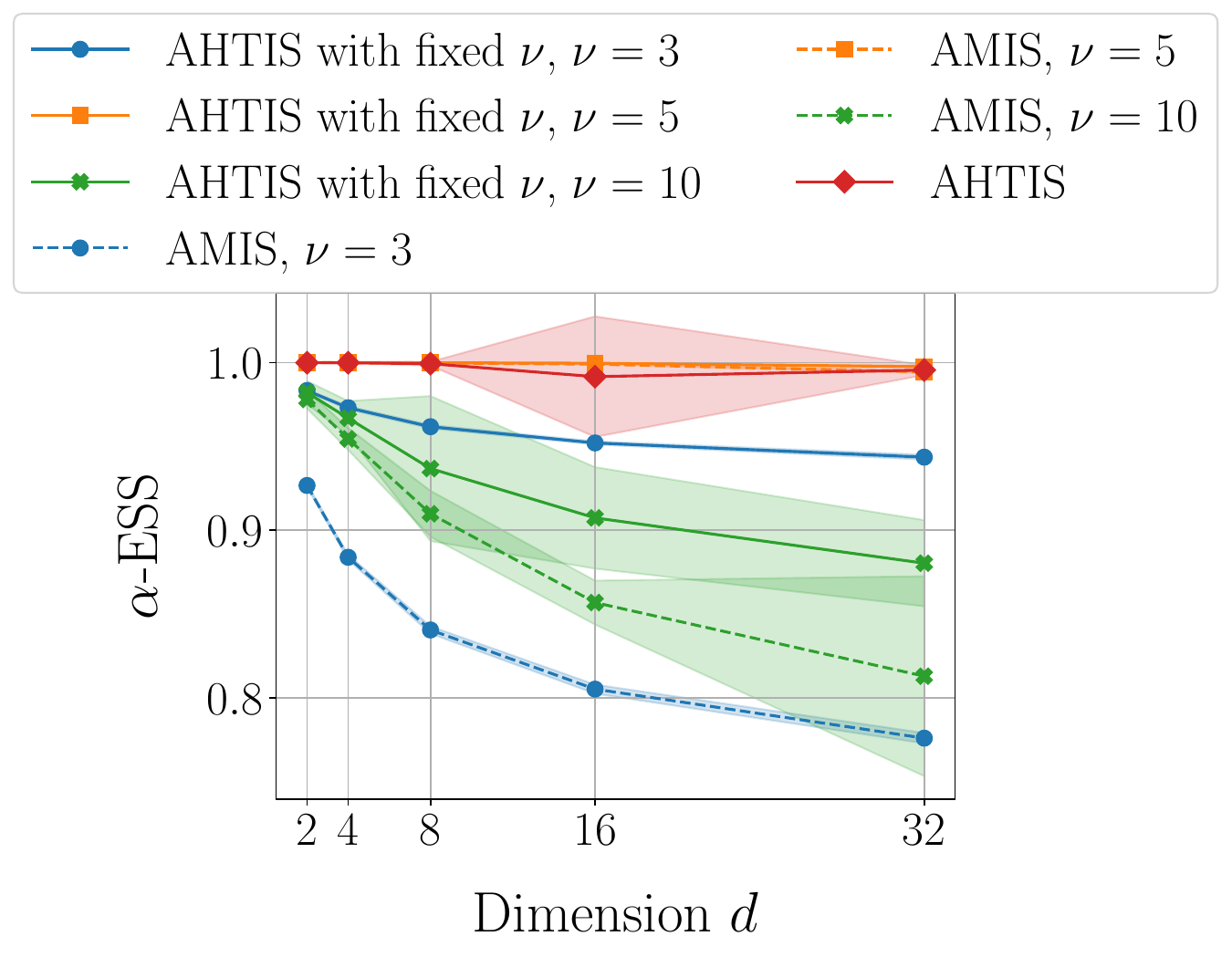}
            \caption{$\alpha$-ESS (mean $\pm$ one standard deviation, higher is better) for various dimensions $d$. \ahtis{} outperforms AMIS for any $\nu$, sometimes by an order of magnitude, and the $\nu$-adaptive version converges to the true value $\nu_{\pi} = 5$.}
            \label{fig:alphaESS_5dofTarget}
      \end{subfigure} \hfill 
      \begin{subfigure}[t]{.49\linewidth}
      \centering
          \includegraphics[width=\linewidth]{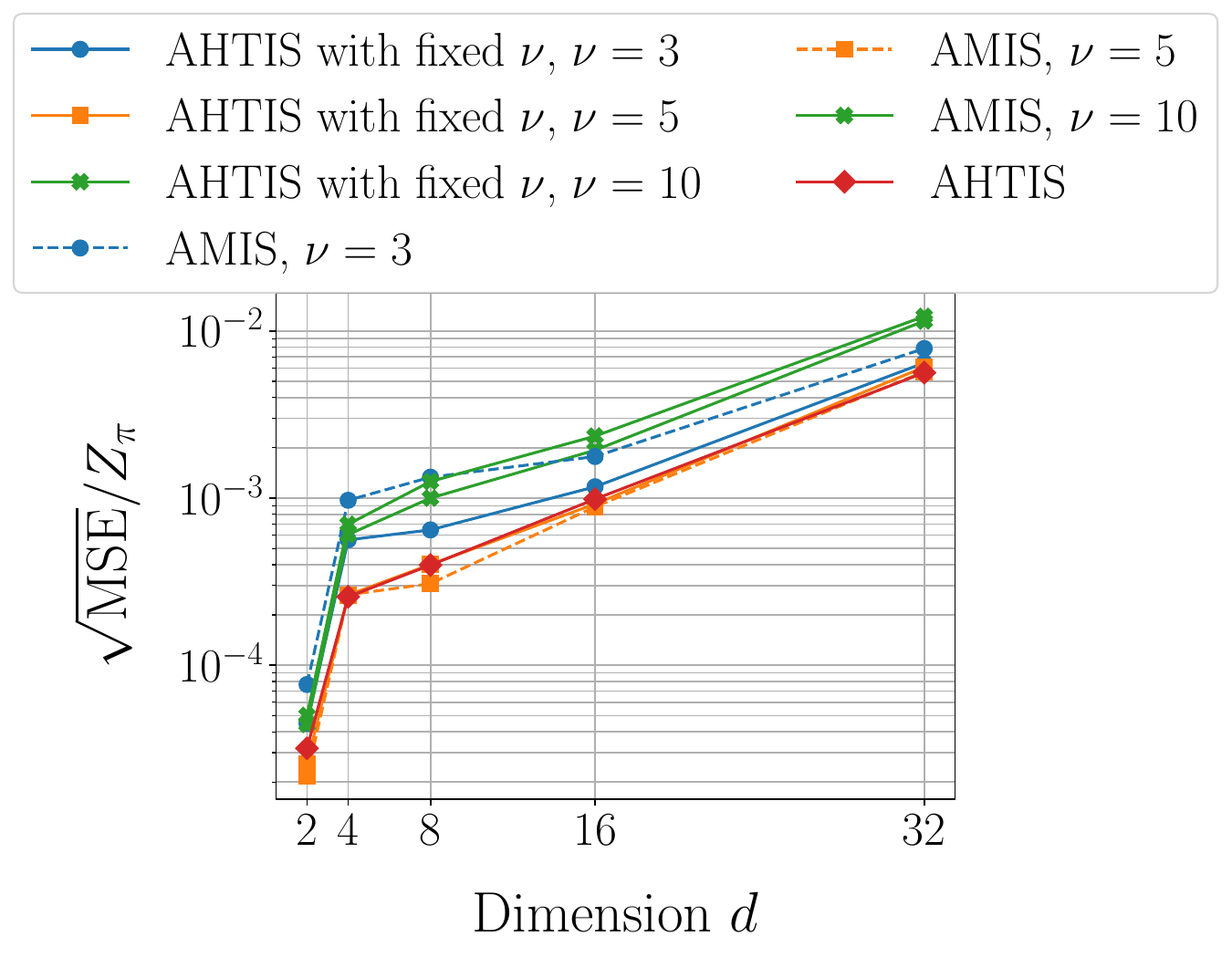}
                    \caption{Relative square root MSE (lower is better) for various dimensions $d$. Note that $Z_{\pi} = Z_{\nu_{\pi},\Sigma_{\pi}}$ is the true normalizing constant, which is available. \ahtis{} outperforms AMIS for any $\nu$ and the $\nu$-adaptive version converges to $\nu_{\pi} = 5$.}
            \label{fig:MSE_Z_2dofTarget}
      \end{subfigure}

\caption{Results for a synthetic Student-t target with $\nu_{\pi} = 5$. All algorithms are run for $T = 20$ iterations, with $M=10^4$ samples per iteration and results are averaged over $100$ replications. A dashed line identifies AMIS, while solid line is \ahtis{}, and same marker/color indicates same $\nu$.}
\label{fig:StudentTarget_5dofTarget}
\end{figure}

\begin{table}[H]
    \centering
    \begin{tabular}{c|cc}
               & $\nu_{\pi} = 2$        & $\nu_{\pi}=5$ \\
        \hline
        $d=2$  & $2.05 \pm 0.569$  & $4.87 \pm 0.142$ \\
        $d=4$  & $2.16 \pm 1.18$  & $4.96 \pm 0.168$\\
        $d=8$  & $1.98 \pm 0.574$ & $4.93 \pm 0.293$\\
        $d=16$ & $2.03 \pm 0.562$ & $5.03 \pm 0.978$\\
        $d=32$ & $2.04 \pm 0.175$ & $5.03 \pm 0.504$
    \end{tabular}
    \caption{Final degree of freedom $\nu_T$ yielded by Algorithm \ref{alg:ours} for a target with degree of freedom parameter $\nu_{\pi}$ in dimension $d$. The results are of the form mean $\pm$ one standard deviation, for $T=20$ over $100$ runs.}
    \label{tab:dof_adaptation}
\end{table}

\paragraph{Results.} Table \ref{tab:dof_adaptation} reveals that the $\nu$-adaptive \ahtis{} is able to correctly capture the tail behaviour of the target with good precision. Fig.~\ref{fig:StudentTarget_5dofTarget} shows a situation where AMIS and $\ahtis$ with $\nu=5$, and the $\nu$-adaptive \ahtis{} are able to reach similar performance in terms of $\alpha$-ESS and MSE. The fact that AMIS is now able to reach performance similar to \ahtis{} (in contrast with the results of Fig.~\ref{fig:synthetic_res}) is because $\nu_{\pi} = 5$, meaning that $\pi$ has well-defined first and second order moments and that AMIS with $\nu =\nu_{\pi}$ can be used. Note however that in the case of a mismatch $\nu \neq \nu_{\pi}$, AMIS is inferior to \ahtis{}.

\subsection{Application to Bayesian Student-t Regression on Real Data}
We include figures with added results for $\nu=4$ for all algorithms (excluded from the main paper for better readability of the main plots). 

\begin{figure}[H]
      \centering
      \begin{subfigure}[t]{.49\linewidth}
      \centering
          \includegraphics[width=\linewidth]{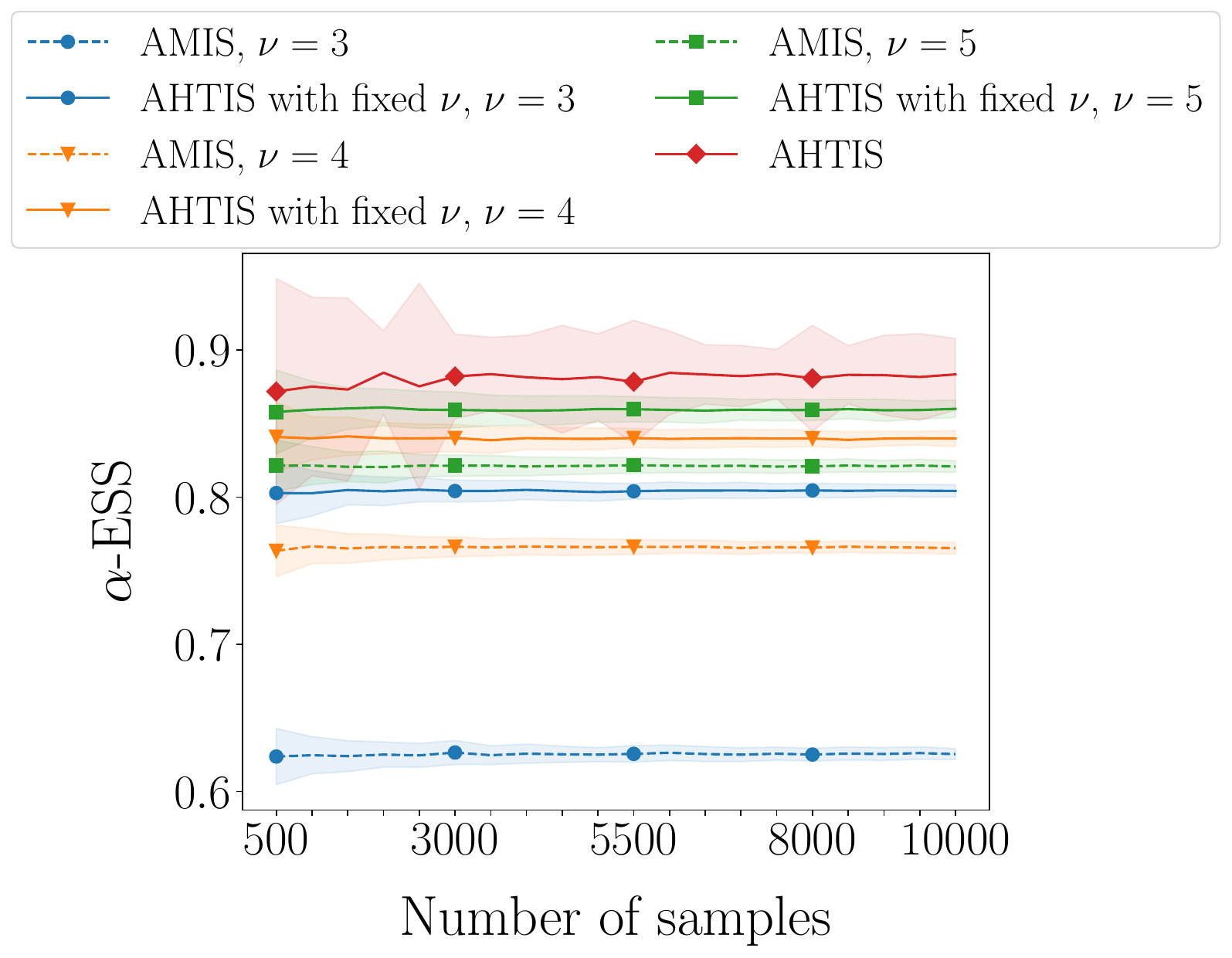}
            \caption{$\alpha$-ESS (mean $\pm$ one standard deviation, higher is better) for the creatinine dataset experiments. }
            \label{fig:1}
      \end{subfigure} \hfill 
      \begin{subfigure}[t]{.49\linewidth}
      \centering
          \includegraphics[width=\linewidth]{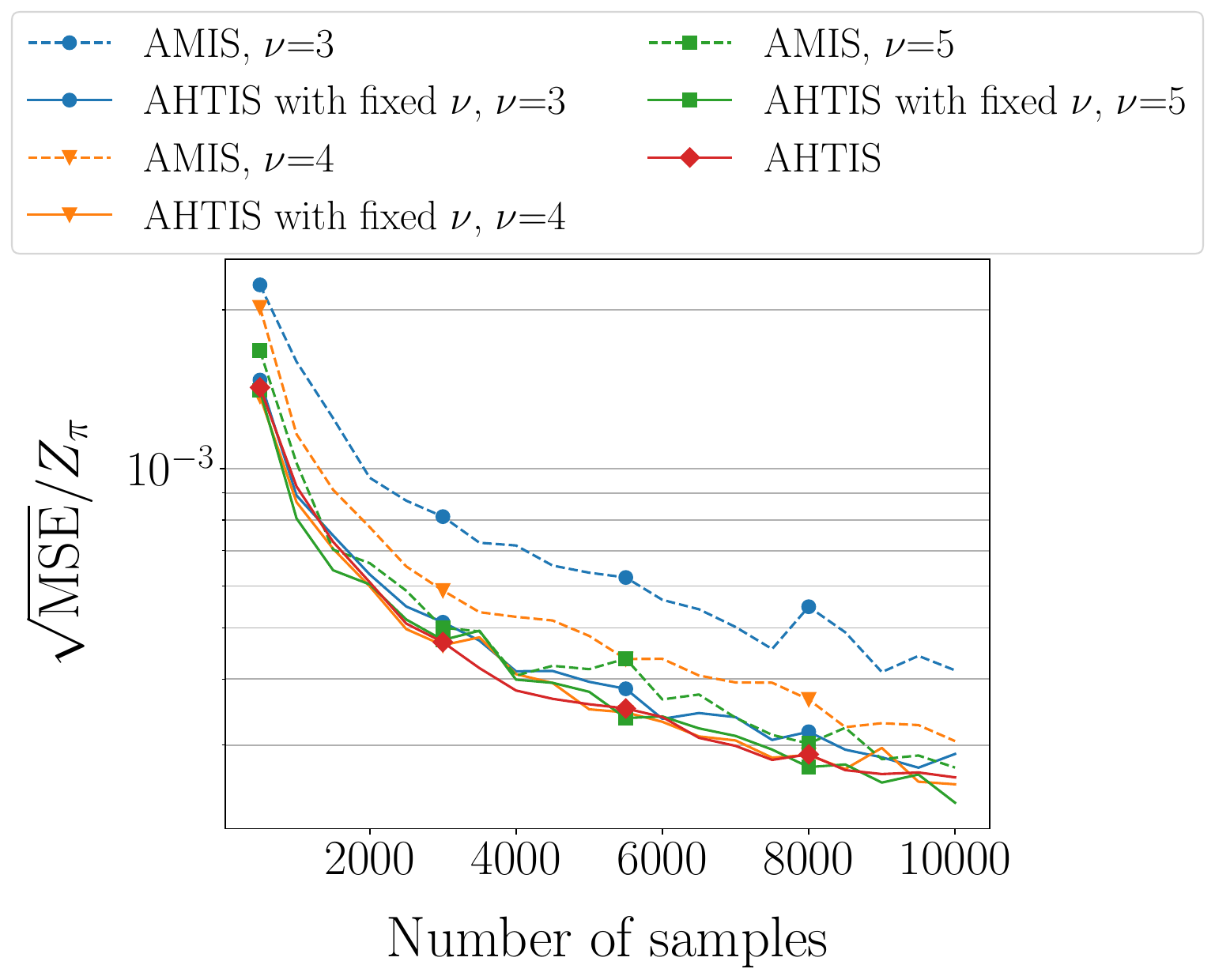}
                    \caption{Relative square root MSE (lower is better) for the creatinine dataset experiments. }
            \label{fig:2}
      \end{subfigure}

\caption{Results here are as in \cref{fig:alphaESS_real,fig:MSE_Z_real}, but with added $\nu=4$. Recall that all algorithms are run for $T = 25$ iteration and results are averaged over $250$ replications. A dashed line identifies AMIS, while solid line is \ahtis{}, and same marker/color indicates same $\nu$.}
\label{fig:further_results}
\end{figure}

\end{document}